\def\BibTeX{{\rm B\kern-.05em{\sc i\kern-.025em b}\kern-.08em
    T\kern-.1667em\lower.7ex\hbox{E}\kern-.125emX}}
\newtheorem{lem}{Lemma}
\newtheorem{thm}{Theorem}
\newtheorem{rmk}{Remark}
\newtheorem{pro}{Proposition}
\newenvironment{proof}{{\indent \indent \it Proof:\quad}}{\hfill $\blacksquare$\par}
\begin{document}
%
\title{RIS-Assisted Quasi-Static Broad Coverage for Wideband mmWave Massive MIMO Systems}

\author{
Muxin~He, 
Jindan~Xu,
Wei~Xu,~\IEEEmembership{Senior Member,~IEEE}, 
Hong~Shen, 
Ning~Wang, 
and Chunming~Zhao
\thanks{M. He, J. Xu, W. Xu, H. Shen, and C. Zhao are with the National Mobile Communications Research Laboratory, Southeast University, Nanjing 210096, China (e-mail: mxhe@outlook.com; jdxu@seu.edu.cn;   wxu@seu.edu.cn; shhseu@seu.edu.cn;  cmzhao@seu.edu.cn). 
W. Xu is also with Henan Joint International Research Laboratory of Intelligent Networking and Data Analysis, Zhengzhou University, Zhengzhou 450001, China. C. Zhao is also with the Purple Mountain Laboratories, Nanjing, China. 

N. Wang is with the School of Information Engineering, Zhengzhou University, Zhengzhou 450001, China (ienwang@zzu.edu.cn).

This work has been submitted to the IEEE for possible publication.  Copyright may be transferred without notice, after which this version may no longer be accessible.
}
}

%



\maketitle
\vspace{-40pt}
\begin{abstract}
Reconfigurable intelligent surfaces (RISs) can establish favorable wireless environments to combat the severe attenuation and blockages in millimeter-wave (mmWave) bands.
However, to achieve the optimal enhancement of performance, the instantaneous channel state information (CSI) needs to be estimated at the cost of a large overhead that scales with the number of RIS elements and the number of users.
In this paper, we design a quasi-static broad coverage at the RIS with the reduced overhead based on the statistical CSI.
We propose a design framework to synthesize the power pattern reflected by the RIS that meets the customized requirements of broad coverage.
For the communication of broadcast channels, we generalize the broad coverage of the single transmit stream to the scenario of multiple streams. 
Moreover, we employ the quasi-static broad coverage for a multiuser orthogonal frequency division multiplexing access (OFDMA) system, and derive the analytical expression of the downlink rate, which is proved to increase logarithmically with the power gain reflected by the RIS.
By taking into account the overhead of channel estimation, the proposed quasi-static broad coverage even outperforms the design method that optimizes the RIS phases using the instantaneous CSI. Numerical simulations are conducted to verify these observations.
\end{abstract}
\vspace{-10pt}
\begin{IEEEkeywords}
Broad coverage, massive multiple-input multiple-output (MIMO), millimeter-wave (mmWave), orthogonal frequency division multiplexing (OFDM), quasi-static, reconfigurable intelligent surface (RIS).
\end{IEEEkeywords}

%
\IEEEpeerreviewmaketitle

\section{Introduction}
%
%
%
%


\IEEEPARstart{R}{econfigurable} intelligent surfaces (RISs) are an innovative technology for implementing both spectral- and energy-efficient wireless networks beyond 5G \cite{Towards2019,Renzo2020Smart2,SH2019,ZSQ2020,SHENHONG2020,Zhaohui2021Beamforming,Shen2021}. It is well compatible with existing multi-antenna systems by mounting additional planar surfaces on the exterior walls of buildings. Each reflecting element independently changes the phase of the impinging signal such that an RIS is able to control the wireless environment. Empowered by this ability, RISs can improve the received signal power and can create optimized alternative propagation paths to bypass obstacles.

In millimeter-wave (mmWave) band, the wireless channel is vulnerable to the attenuation and blockages, which seriously affects the quality of service. RISs are, therefore, a promising technology for making mmWave communications more reliable \cite{Towards2019,di2019smart,Renzo2020Reconfigurable,Renzo2020Smart2}.
In \cite{Yaqiong2021}, the authors studied the cooperative reflection design for multiple RISs under the mmWave channels by taking the timing synchronization errors into account.
The authors of \cite{Perovi2020} designed the RIS to improve the channel capacity of a mmWave indoor environment without any line-of-sight (LoS) path.
In \cite{Wang2021}, the authors maximized the spectral efficiency of an RIS-assisted mmWave  system by exploiting the inherence structure of the cascaded mmWave channel.
The authors of \cite{Wenhui2021} estimated the cascaded channel for an RIS-assisted mmWave system with the quantized beamforming at the receiver.
In \cite{Pradhan2020}, the authors jointly designed the hybrid precoding at the base station (BS) and the phase shifts of the RIS in a multiuser mmWave system.

Regarding mmWave systems with wide band, the orthogonal frequency division multiplexing (OFDM) is commonly applied to combat the frequency-selective fading. However, the RIS design method proposed for the single-carrier system cannot be directly employed to the OFDM based multi-carrier system since an RIS cannot provide independent phase control for each subcarrier.
In \cite{OFDM9039554}, the authors maximized the achievable rate of an OFDM system by jointly optimizing the power allocation and the phase shifts of the RIS.
The authors of \cite{zhang2019} optimized the channel capacity of RIS-assisted OFDM systems by jointly designing the transmit covariance matrix and the phase shifts of the RIS.
In \cite{OFDM8937491}, the authors improved the performance of an OFDM system with a low-complexity method that matches the phase shifts of the RIS with the phase of the strongest channel path.
Considering the practical RIS model with dual phase- and amplitude- squint effect, the authors of \cite{Hongyu2021} studied the sum-rate maximization problem for a multiuser multi-antenna OFDM system with continuous and discrete phase shifts at the RIS.

Furthermore, the reflection design of RISs relies on the channel state information (CSI) among the BS, the RIS, and the user equipment (UE), which is difficult to acquire due to the nearly-passive hardware and the limited sensing capability of RISs \cite{Towards2019,Renzo2020Smart2}.
Even worse, the overhead of channel estimation is huge since a large number of reflecting elements are equipped at the RIS in order to compensate for the path losses of the BS-RIS channel and the RIS-UE channel \cite{Tang2021}.
Accordingly, the authors of \cite{OFDM9039554} and \cite{OFDM8937491} utilized a grouping method to reduce the training overhead for RIS-assisted OFDM systems, where the adjacent RIS reflecting elements were grouped to share a common reflection coefficient.
However, the channel estimation schemes proposed in \cite{OFDM9039554} and \cite{OFDM8937491} cannot be efficiently applied to the scenario of multiple users since the UE-by-UE successive channel estimation leads to a large overhead that scales with the number of users.
In \cite{Zheng2020}, the authors proposed two channel estimation schemes for the RIS-assisted multiuser orthogonal frequency division multiplexing access (OFDMA) system, where the maximum number of supported users was proved to be limited, and an increment of users was obtained at the expense of higher complexity and degraded performance.
It was shown in \cite{OFDM9039554}, \cite{OFDM8937491}, and \cite{Zheng2020} that the large overhead of training and feedback poses a challenge for RIS-assisted OFDM systems.

Since the acquisition of full CSI is challenging with low-cost RIS circuits, some studies, e.g., \cite{Youli2020,Nadeem2020,Gao2021}, investigated RIS-assisted communications with partial CSI. Authors of \cite{Youli2020} discussed a tradeoff between the energy efficiency and the spectral efficiency of an RIS-assisted multiuser multi-antenna system by using the partial CSI between the RIS and the users.
In \cite{Nadeem2020}, with the CSI of signal-to-noise-ratio (SNR), the authors proposed an RIS-assisted opportunistic beamforming for broadcast channels.
In a multi-antenna system assisted by multiple distributed RISs, the authors of \cite{Gao2021} proposed a low-complexity RIS design by exploiting the statistical correlation information of channels.

In addition, in the communication of broadcast channels, not all the users' CSI are available.
Therefore, the transmit signals are commonly precoded into a broad beam to cover the users at different locations, where the flat-top pattern with small power fluctuations is preferred.
In conventional massive multiple-input multiple-output (MIMO) systems, the authors of \cite{Meng2016} utilized the Zadoff-Chu sequence to design a channel-independent omnidirectional precoding that maintained the equal average radiation power in each spatial direction.
In \cite{Meng2018}, the authors proposed the channel-independent omnidirectional space-time block coding to satisfy a more strict constant power constraint at any instant time.
To obtain a flexible sector size of coverage, the authors of \cite{guowr2019} utilized the manifold optimization to synthesize the target flat-top pattern with negligible power fluctuations, which needed no CSI of users.

Considering the above difficulties in acquiring the CSI for the RIS design, the RIS-assisted broad coverage can be developed by broadening the power pattern reflected by the RIS.
In \cite{Jamali2021}, the authors employed the quadratic gradient phase shifts at the RIS instead of the linear gradient phase shifts to broaden the power pattern reflected by the RIS.
However, the control of the power fluctuations within the broad beam was not mentioned.
Other methods of beam broadening include the deactivation based technique and sub-array based technique \cite{Xiao2016}.
The deactivation based technique generates a wider beam by turning off part of the antennas in an array, which, however, limits the maximal total transmit power when the number of active antennas is small \cite{Xiao2016}.
The sub-array based technique points the beam of each sub-array towards a separated angle, and combines the multiple beams of the sub-arrays into a broad beam.
In \cite{Hai2021}, the authors proposed a sub-array based beam broadening and flattening technique to the design of RIS phase shifts, which achieved better performance than the deactivation based method.
However, the power pattern generated by \cite{Hai2021} cannot admit a continuous adjustment of the beamwidth due to the limited spatial resolution of the sub-arrays, especially for a relatively small array.
In \cite{He2021}, inspired by \cite{guowr2019}, we adopted the manifold method to synthesize a customized flat-top power pattern with an arbitrarily defined beamwidth at the RIS.
Since the non-convex constant modulus constraint on the phase shift is a challenge of the RIS design, we can regard this constraint as a manifold to facilitate the optimization \cite{Yu2019}.

Note that the above beam broadening techniques relies on the single incident angle of arrival at the RIS, e.g., the LoS channel path. However, in the RIS-assisted communication of broadcast channels, the transmission design using the LoS BS-RIS channel results in an equivalent rank-1 cascaded BS-RIS-UE channel, which hinders the transmission of multiple streams.
Moreover, when the multi-stream transmission is supported by multiple channel paths between the BS and the RIS, the design method based on the LoS BS-RIS channel, e.g., as proposed in \cite{Hai2021} and \cite{He2021}, cannot ensure a small power fluctuation within the coverage due to the interference from the non-LoS (NLoS) BS-RIS channel paths.

To address the aforementioned issues, we design the quasi-static broad coverage at the RIS by using the statistical CSI to reduce the overhead.
Moreover, we utilize multiple mmWave BS-RIS channel paths in the pattern synthesis such that the rank of the cascaded channel matrix can support the transmission of multiple streams.
Under the proposed quasi-static broad coverage, we analyze the system performance in both the broadcast channel and the OFDMA channel.
The main contributions of our work are summarized as follows.

1) We propose a design framework to synthesize the flat-top power pattern reflected by the RIS in a mmWave massive MIMO-OFDM system, where the precoder at the BS and the phase shifts of the RIS are jointly designed.
We regard the constant modulus constraint of the RIS phase shifts as a Riemannian manifold such that we can perform the conjugate gradient (CG) method to minimize the distance between the target pattern and the power pattern reflected by the RIS.
The proposed design framework is general since it can be applied to the synthesis of a customized target pattern and to the scenario of multipath channel models. 
With the proposed design framework, we can provide a quasi-static broad coverage for the users by exploiting only the statistical CSI of mmWave channel paths, which saves the overhead of channel estimation and feedback.

2) In the communication of broadcast channels, 
we generalize RIS-assisted broad coverage of the single transmit stream to the scenario of multiple transmit streams. 
When the BS-RIS channel is simplified as an LoS channel to formulate the reflected flat-top pattern, the equivalent cascaded BS-RIS-UE channel is rank-1 and thus supports only a single transmit stream. 
In this paper, we employ a more general multipath mmWave model for the BS-RIS channel, which poses new challenge to the pattern synthesis. 
Therefore, multiple transmit streams can be supported by the equivalent high-rank cascaded BS-RIS-UE channel, which, to the best of our knowledge, has not been provided in the literature.
Moreover, it is proved that, within the broad coverage, each user receives a constant average received power.

3) In order to provide reliable communications for multiuser OFDMA systems, we design a quasi-static flat-top power pattern at the RIS to cover the users instead of optimizing the RIS phase shifts with the instantaneous CSI.
By using the dominant LoS channel, the proposed quasi-static broad coverage needs no more channel estimation for controlling the RIS, whose overhead approaches that of the conventional OFDMA system.
Therefore, by taking the overhead into consideration, the proposed broad coverage even outperforms the RIS design method that needs the estimation of the instantaneous CSI.
Furthermore, we derive an analytical expression of the downlink rate of the considered OFDMA system, and find that the rate increases logarithmically with the power gain reflected by the RIS.

The rest of this paper is organized as follows. Firstly, the considered system model is introduced in Section \ref{sec:systemmodel}. The manifold optimization is presented in Section \ref{sec:manifold}. In Section \ref{sec:applicationandperformance}, we discuss the applications of broad coverage and analyze the system performance. Finally, simulations are conducted in Section \ref{sec:numerical}, and conclusions are drawn in Section \ref{sec:conclusion}.

\emph{Notation:} Lowercase and uppercase boldface letters represent column vectors and matrices, respectively. ${\bf A}^*, {\bf A}^T$, and ${\bf A}^H$ denote the conjugate, the transpose, and the Hermitian transpose of matrix ${\bf A}$, respectively. 
${\rm diag}({\bf a})$ is the diagonal matrix with vector ${\bf a}$ as its diagonal entries, and ${\rm diag}({\bf A})$ takes only the diagonal entries of matrix ${\bf A}$.
${\rm vec}\{\cdot\}$ denotes the vectorization operator that stacks the columns of the input matrix into a long column.
$\|\cdot\|_2$ and $\|\cdot\|_F$ denote the 2-norm and the Frobenius norm, respectively. ${\rm tr}({\bf A})$ denotes the trace of matrix ${\bf A}$. $\Re[\cdot]$ denotes the real part of the input. $\circ$ denotes the Hadamard product. 
$\jmath$ denotes the imaginary unit. 
${\mathbb C}^{m\times n}$ stands for the set of ${m\times n}$ complex matrices. 
${\mathbb Z}$, ${\mathbb R}^+$, and $\emptyset$ denote the integer set, the set of positive real number, and the empty set, respectively.

\section{System Model}\label{sec:systemmodel}

\subsection{Signal Model}
\begin{figure}[!t]
\centering
\vspace{-5pt}
\includegraphics[width=13pc]{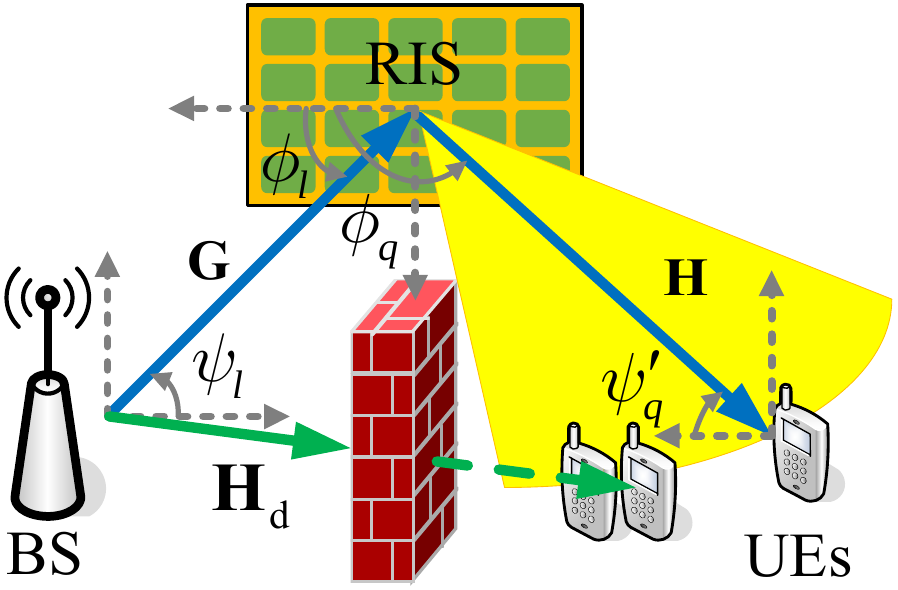}
\caption{RIS-assisted quasi-static broad coverage in a mmWave massive MIMO system.}
\label{fig:overview}
\vspace{-10pt}
\end{figure}

As shown in Fig. \ref{fig:overview}, we consider the RIS-assisted broad coverage in a mmWave massive MIMO system, where several users with $N_{\rm UE}$ antennas are served by a BS with $N_{\rm BS}$ antennas and assisted by an RIS with $M$ reflecting elements. In the considered system, the BS fails to provide reliable communications for the users due to the obstacle between them, and an RIS is deployed around the obstacle to provide a broad coverage for the blocked users.
For an OFDM-based downlink transmission, the received signal ${\bf r}[k]\in{\mathbb C}^{N_{\rm UE} \times 1}$, at the $k$-th subcarrier, of one given user is
\begin{align}\label{eq:signal}
{\bf r}[k]=\sqrt{p} \left(\sqrt{\beta_1 \beta_2}{\bf H}[k]{\bf \Theta}{\bf G}[k]+\sqrt{\beta}{\bf H}_{\rm d}[k]\right){\bf W}[k]{\bf s}[k]+{\bf z}[k],
\end{align}
\noindent where $p$ is the transmit power of the BS, $\beta_1$, $\beta_2$, and $\beta$ are the large-scale fading factors of the BS-RIS channnel, the RIS-UE channel, and the BS-UE direct channel, respectively, ${\bf H}[k] \in {\mathbb C}^{N_{\rm UE} \times M}$ is the channel from the RIS to the user,
${\bf \Theta} = {\rm diag}\left(\theta_{0} ,\theta_{1} ,..., \theta_{M - 1}\right)$ is the phase matrix at the RIS,
${\bf G}[k]\in{\mathbb C}^{M \times N_{\rm BS}}$ is the channel from the BS to the RIS, ${\bf H}_{\rm d}[k]\in{\mathbb C}^{N_{\rm UE} \times N_{\rm BS}}$ is the direct channel from the BS to the user,
${\bf W}[k] \in {\mathbb C}^{N_{\rm BS} \times N_{\rm d}}$ is the precoding matrix of $N_{\rm d}$ transmit streams at the BS, ${\bf s}[k]\in{\mathbb C}^{N_{\rm d} \times 1}$ is the transmit signal with ${\mathbb E}\{{\bf s}[k]{\bf s}^H[k]\}={\bf I}_{N_{\rm d}}$, and ${\bf z}[k] \in {\mathbb C}^{N_{\rm UE} \times 1}$ is the additive white Gaussian noise (AWGN) following ${\mathcal{CN}}({\bf 0},\sigma_z^2{\bf I}_{N_{\rm UE}})$.
Note that the precoding matrices at $N_{\rm c}$ subcarriers are normalized to $\sum_{k=0}^{N_{c}-1}\left\|{\bf W}[k]\right\|_F^2=N_{\rm c}$.

\subsection{Channel Model}
As for the time-domain multipath mmWave channels ${\bf G}[n]$, ${\bf H}[n]$, and ${\bf H}_{\rm d}[n]$,
we adopt the geometric channel model as follows\cite{Gao2016,Venugopal2017}
\begin{align}
{\bf G}[n]&\triangleq \sqrt{N_{\rm BS}M}\sum_{l=0}^{L-1}\alpha_{l}{\bf a}_{\bf G}(\phi_{l}){\bf b}_{\bf G}^H(\psi_{l})g(nT-\tau_l),\label{eq:G}\\
{\bf H}[n]&\triangleq \sqrt{N_{\rm UE}M}\sum_{q=0}^{Q-1}\alpha_{q}{\bf b}_{\bf H}(\psi'_{q}){\bf a}_{\bf H}^H(\phi_{q})g(nT-\tau_{q}),\label{eq:H}\\
{\bf H}_{\rm d}[n]&\triangleq \sqrt{N_{\rm BS}N_{\rm UE}}\sum_{q'=0}^{Q'-1}\alpha_{q'}{\bf b}_{\bf H}(\psi'_{q'}){\bf b}_{\bf G}^H(\psi_{q'})g(nT-\tau_{q'}),\label{eq:Hd}
\end{align}
\noindent where $n$ is the sampling index, $L$, $Q$, and $Q'$ are the numbers of paths of the BS-RIS channel, the RIS-UE channel, and the BS-UE direct channel, respectively,
$\alpha_l$, $\alpha_q$, and $\alpha_{q'}$ are the complex gains of the $l$-th path of ${\bf G}[n]$, the $q$-th path of ${\bf H}[n]$, and the $q'$-th path of ${\bf H}_{\rm d}[n]$, respectively, 
$\phi_l$ and $\phi_q$ are the angles of arrival (AoAs) of the $l$-th path of ${\bf G}[n]$ and the angles of departure (AoDs) of the $q$-th path of ${\bf H}[n]$, respectively, 
$\psi_l$ and $\psi_{q'}$ are the AoDs of the $l$-th path of ${\bf G}[n]$ and the $q'$-th path of ${\bf H}_{\rm d}[n]$, respectively, 
$\psi'_{q}$ and $\psi'_{q'}$ are the AoAs of the $q$-th path of ${\bf H}[n]$ and the $q'$-th path of ${\bf H}_{\rm d}[n]$, respectively,  
${\bf a}_{\bf G}\in{\mathbb C}^{M \times 1}$ and ${\bf a}_{\bf H}\in{\mathbb C}^{M \times 1}$ are the array response vectors (ARVs) at the RIS that correspond to the signal arrival and the signal departure, respectively,
${\bf b}_{\bf G}\in{\mathbb C}^{N_{\rm BS} \times 1}$ and ${\bf b}_{\bf H}\in{\mathbb C}^{N_{\rm UE} \times 1}$ are the ARVs at the BS and the user, respectively,
$T$ is the sampling period, $\tau_l$, $\tau_{q}$, and $\tau_{q'}$ are the transmission delays of the $l$-th path of ${\bf G}[n]$, the $q$-th path of ${\bf H}[n]$, and the $q'$-th path of ${\bf H}_{\rm d}[n]$, respectively, 
and $g(t)$ is the rectangular pulse shaping filter, i.e., $g(t)=1$ only for $-T\leq t<0$. Moreover, we assume that the channel paths undergo uncorrelated fading.
Since a planar array can be treated as two decoupled linear arrays in the pattern synthesis \cite{Hai2021}, in this paper, we use the model of uniform linear array at the BS, the RIS, and the UE for simplification and define the ARVs as follows
\begin{align}
{\bf a}_{\bf G}(\phi_l)&\triangleq\frac{1}{\sqrt{M}}\left[1, e^{-\jmath \frac{2\pi}{\lambda}\rho \cos \phi_l}, ..., e^{-\jmath \frac{2\pi}{\lambda}(M-1)\rho \cos \phi_l}\right]^T,\label{eq:aG}\\
{\bf b}_{\bf G}(\psi_l)&\triangleq\frac{1}{\sqrt{N_{\rm BS}}}\left[1, e^{-\jmath \frac{2\pi}{\lambda}\rho \sin \psi_l}, ..., e^{-\jmath \frac{2\pi}{\lambda}(N_{\rm BS}-1)\rho \sin \psi_l}\right]^T, \label{eq:bG}\\
{\bf a}_{\bf H}(\phi_q)&\triangleq\frac{1}{\sqrt{M}}\left[1, e^{\jmath \frac{2\pi}{\lambda}\rho \cos \phi_q}, ..., e^{\jmath \frac{2\pi}{\lambda}(M-1)\rho \cos \phi_q}\right]^T,\label{eq:aH}\\
{\bf b}_{\bf H}(\psi'_q)&\triangleq\frac{1}{\sqrt{N_{\rm UE}}}\left[1, e^{-\jmath \frac{2\pi}{\lambda}\rho \sin \psi'_q}, ..., e^{-\jmath \frac{2\pi}{\lambda}(N_{\rm UE}-1)\rho \sin \psi'_q}\right]^T,\label{eq:bH}
\end{align}
\noindent where $\lambda$ is the wavelength and $\rho=\lambda/2$ is the distance between adjacent elements. 

As for the frequency-domain channels, we apply an $N_{\rm c}$-dimensional discrete Fourier transform (DFT) to (\ref{eq:G})-(\ref{eq:Hd}), and obtain
\begin{align}
{\bf G}[k]&=\sum_{n=0}^{N_{\rm c}-1}{\bf G}[n]e^{-\jmath \frac{2\pi kn}{N_{\rm c}}}
\triangleq\sqrt{N_{\rm BS}M}\sum_{l=0}^{L-1}\delta_l[k]{\bf a}_{\bf G}(\phi_{l}){\bf b}_{\bf G}^H(\psi_{l}),\label{eq:Gfreq}\\
{\bf H}[k]&=\sum_{n=0}^{N_{\rm c}-1}{\bf H}[n]e^{-\jmath \frac{2\pi kn}{N_{\rm c}}}
\triangleq\sqrt{N_{\rm UE}M}\sum_{q=0}^{Q-1}\delta_q[k]{\bf b}_{\bf H}(\psi'_{q}){\bf a}_{\bf H}^H(\phi_{q}),\label{eq:Hfreq}\\
{\bf H}_{\rm d}[k]&=\sum_{n=0}^{N_{\rm c}-1}{\bf H}_{\rm d}[n]e^{-\jmath \frac{2\pi kn}{N_{\rm c}}}
\triangleq\sqrt{N_{\rm BS}N_{\rm UE}}\sum_{q'=0}^{Q'-1}\delta_{q'}[k]{\bf b}_{\bf H}(\psi'_{q'}){\bf b}_{\bf G}^H(\psi_{q'}),\label{eq:Hdfreq}
\end{align}
\noindent where $\delta_l[k]\triangleq\sum_{n=0}^{N_{\rm c}-1}\alpha_{l}g(nT-\tau_l)e^{-\jmath \frac{2\pi kn}{N_{\rm c}}}=\alpha_{l}e^{-\jmath \frac{2\pi kn_l}{N_{\rm c}}}$,
$\delta_q[k]\triangleq\sum_{n=0}^{N_{\rm c}-1}\alpha_{q}g(nT-\tau_{q})e^{-\jmath \frac{2\pi kn}{N_{\rm c}}}=\alpha_{q}e^{-\jmath \frac{2\pi kn_q}{N_{\rm c}}}$, and $\delta_{q'}[k]\triangleq\sum_{n=0}^{N_{\rm c}-1}\alpha_{q'}g(nT-\tau_{q'})e^{-\jmath \frac{2\pi kn}{N_{\rm c}}}=\alpha_{q'}e^{-\jmath \frac{2\pi kn_{q'}}{N_{\rm c}}}$
with the indices $n_l$, $n_q$, and $n_{q'}$ satisfying $n_lT<\tau_l\leq(n_l+1)T$,  $n_qT<\tau_q\leq(n_q+1)T$, and $n_{q'}T<\tau_{q'}\leq(n_{q'}+1)T$, respectively.

Moreover, the channel model in (\ref{eq:Gfreq}) can be written in a compact form as follows
\begin{align}\label{eq:Gcom}
{\bf G}[k]
={\bf A}_{\bf G}{\bf \Delta}_{\bf G}[k]{\bf\bar \Delta}_{\bf G}{\bf B}_{\bf G}^H,
\end{align}
\noindent where ${\bf A}_{\bf G}\triangleq[{\bf a}_{\bf G}(\phi_0),{\bf a}_{\bf G}(\phi_1),...,{\bf a}_{\bf G}(\phi_{L-1})]$,
${\bf \Delta}_{\bf G}[k]\triangleq{\rm diag}\left(e^{-\jmath \frac{2\pi kn_0}{N_{\rm c}}},e^{-\jmath \frac{2\pi kn_1}{N_{\rm c}}},...,e^{-\jmath \frac{2\pi kn_{L-1}}{N_{\rm c}}}\right)$,
${\bf\bar \Delta}_{\bf G}\triangleq{\rm diag}(\alpha_0,\alpha_1,...,\alpha_{L-1})$,
and ${\bf B}_{\bf G}\triangleq[{\bf b}_{\bf G}(\psi_0),{\bf b}_{\bf G}(\psi_1),...,{\bf b}_{\bf G}(\psi_{L-1})]$.

\subsection{Power Pattern of the RIS}

For the transmission of $N_{\rm d}$ streams, we can treat the power pattern as the summation of $N_{\rm d}$ patterns, each of which is generated by one column of ${\bf W}$ \cite{guowr2019}.
Therefore, we obtain the power pattern reflected by the RIS, towards the angle $\phi$ and at the $k$-th subcarrier, as follows
\begin{align}\label{eq:pattern}
y(\phi,k)&=M\left\|{\bf a}_{\bf H}^H(\phi){\bf \Theta}{\bf G}[k]{\bf W}[k]\right\|_2^2,
\end{align}
\noindent where ${\bf a}_{\bf H}(\phi)$, defined in (\ref{eq:aH}), represents the steering vector towards $\phi$. We omit the transmit power $p$ and the large-scale fading factors in (\ref{eq:pattern}) for simplification.
Note that the power pattern $y(\phi,k)$ defined in (\ref{eq:pattern}) differs among the subcarriers since it depends on the BS-RIS channel ${\bf G}[k]$ at each subcarrier.

However, the acquirement of instantaneous CSI in the RIS-assisted OFDM system requires a high overhead of training and feedback, therefore, it is viable to synthesize the pattern by using the statistical CSI between the BS and the RIS. 
In this way, we take the expectation of (\ref{eq:pattern}) and obtain the average power pattern $y(\phi)$ as follows
\begin{align}\label{eq:avepattern}
y(\phi)
&={M}{\mathbb E}\left\{\|{\bf a}_{\bf H}^H(\phi){\bf \Theta}{\bf G}[k]{\bf W}[k]\|_2^2\right\}\nonumber\\
&=M^2 N_{\rm BS}{\bf a}_{\bf H}^H(\phi){\bf \Theta}{\bf A}_{\bf G}{\bf \Delta}_{\bf G}[k]{\mathbb E}\left\{{\bf\bar \Delta}_{\bf G}{\bf B}_{\bf G}^H{\bf W}[k]
{\bf W}[k]^H{\bf B}_{\bf G}{\bf\bar \Delta}_{\bf G}^H\right\}{\bf \Delta}_{\bf G}^H[k]{\bf A}_{\bf G}^H{\bf \Theta}^H{\bf a}_{\bf H}(\phi)\nonumber\\
&=M^2 N_{\rm BS}{\bf a}_{\bf H}^H(\phi){\bf \Theta}{\bf A}_{\bf G}{\bf I}\circ\left({\bf\Lambda}{\bf B}_{\bf G}^H{\bf W}
{\bf W}^H{\bf B}_{\bf G}\right){\bf A}_{\bf G}^H{\bf \Theta}^H{\bf a}_{\bf H}(\phi),
\end{align}
\noindent where ${\bf\Lambda}\triangleq{\mathbb E}\left\{{\bf\bar \Delta}_{\bf G}{\bf\bar \Delta}_{\bf G}^H\right\}={\rm diag}\left({\mathbb E}\{|\alpha_0|^2\},{\mathbb E}\{|\alpha_1|^2\},...,{\mathbb E}\{|\alpha_{L-1}|^2\}\right)$.
Note that the average power pattern derived in (\ref{eq:avepattern}) is independent of the index of subcarrier $k$, therefore, the design of the precoder ${\bf W}$ and the RIS phase matrix ${\bf \Theta}$ is frequency-independent. 
Moreover, the RIS provides a quasi-static coverage for the users since the average power pattern derived in (\ref{eq:avepattern}) relies on the statistical CSI of the channel paths, i.e., ${\bf A}_{\bf G}$, ${\bf B}_{\bf G}$, and ${\bf\Lambda}$.

\begin{rmk}\label{rmk:LOS}
	When the power pattern is synthesized by using the LoS channel between the BS and the RIS, as previously discussed in \cite{He2021}, we find that the power pattern in (\ref{eq:pattern}) and the corresponding pattern synthesis are essentially frequency-independent.
\end{rmk}

To facilitate the pattern synthesis, we oversample the continuous pattern $y(\phi)$ derived in (\ref{eq:avepattern}) at the discrete angles of ${\bar \phi_j}=\frac{\pi}{\kappa M}j$, $j=0,1,...,\kappa M-1$, where $\kappa>1$. Then, we obtain the discrete power pattern ${\bf y}$ as follows \cite{guowr2019}
\begin{align}\label{eq:avegdis}
{\bf  y}=M^2 N_{\rm BS}{\rm diag}\left({\bf\widetilde A}{\bf \Theta}{\bf A}_{\bf G}{\bf I}\circ\left({\bf\Lambda}{\bf B}_{\bf G}^H{\bf W}
{\bf W}^H{\bf B}_{\bf G}\right){\bf A}_{\bf G}^H{\bf \Theta}^H{\bf\widetilde A}^H\right),
\end{align}
\noindent where
\begin{align}\label{eq:T}
{\bf\widetilde A}\triangleq\left[{\bf a}_{\bf H}({\bar \phi_0}),{\bf a}_{\bf H}({\bar \phi_1}),...,{\bf a}_{\bf H}({\bar \phi_{\kappa M-1}})\right]^H.
\end{align}

In order to meet the requirement of the broad coverage, the reflected power at each angle in (\ref{eq:avegdis}) needs to be synthesized into the corresponding desired value, which defines a target power pattern.
To this end, we formulate the desired broad coverage by defining the target pattern in the next subsection.

\subsection{Target Pattern}

To provide a reliable broad coverage for the blocked users, the target pattern is defined as a flat-top pattern as follows
\begin{align}\label{eq:targetf}
f(\phi)\triangleq
\begin{cases}
f_{\rm M},&\ |\phi-\phi_{\rm c}|\leq \phi_{0.5}(1-\varepsilon)\\
f_{\rm S},&\ |\phi-\phi_{\rm c}|>\phi_{0.5}(1+\varepsilon)\\
\mu\left(\frac{\pi\left[|\phi-\phi_{\rm c}|-\phi_{0.5}(1-\varepsilon)\right]}{2\varepsilon \phi_{0.5}}\right),&\ \text{otherwise,}
\end{cases}
\end{align}
\noindent where $f_{\rm M}$ denotes the power of the flat-top beam, $f_{\rm S}$ denotes the power of the side lobe, 
$\phi_{\rm c}$ denotes the angle that the center of the flat-top beam points towards, 
$\phi_{0.5}$ denotes half of the beamwidth of the flat-top beam,  $\mu(x)\triangleq\frac{f_{\rm M}+f_{\rm S}}{2}+\frac{f_{\rm M}-f_{\rm S}}{2}\cos x$, and $\varepsilon$ is the roll-off factor. Note that the target pattern in (\ref{eq:targetf}) provides a broad coverage with the constant power gain over the angle range of $[\phi_{\rm c}-\phi_{0.5},\phi_{\rm c}+\phi_{0.5}]$. 
Then, the target pattern is discretized into a vector ${\bf  f}$ with the same oversampling factor $\kappa$ as defined in (\ref{eq:avegdis}).

Therefore, we can synthesize the power pattern in (\ref{eq:avegdis}) according to the target pattern defined in (\ref{eq:targetf}). In the next subsection, we formulate the optimization problem of pattern synthesis.

\subsection{Problem Formulation}\label{subsec:formulation}

To obtain the desired pattern, we synthesize the power pattern ${\bf  y}$ derived in (\ref{eq:avegdis}) by jointly optimizing the precoder ${\bf W}$ and the RIS phase matrix ${\bf \Theta}$.
To start with, we define the cost function of the optimization problem as the distance between the power pattern ${\bf  y}$ and the target pattern ${\bf f}$.
In \cite{guowr2019} and \cite{He2021}, the cosine distance was used to depict the difference between ${\bf  y}$ and ${\bf  f}$, where the magnitude information was not considered \cite{tan2015}.
When using the LoS BS-RIS channel model for the design, we can decouple the BS precoder from the power pattern and handle the magnitude of the pattern independently \cite{He2021}.
However, when considering the multipath channel model between the BS and the RIS, the precoder ${\bf W}$ has an impact on both the magnitude and the shape of the power pattern, and cannot be decoupled from the power pattern as shown in (\ref{eq:avegdis}).
The lack of magnitude information can lead to a degradation of the power gain when pursuing a well-shaped pattern.
Therefore, we use a weighted Euclidean distance in the following to take the magnitude of power pattern into consideration.

As defined in (\ref{eq:targetf}), the target power pattern is divided into three regions: the flat-top region $\mathcal{A}_1=\{\phi\big|\ |\phi-\phi_{\rm c}|\leq \phi_{0.5}(1-\varepsilon)\}$, the sidelobe region $\mathcal{A}_2=\{\phi\big|\ |\phi-\phi_{\rm c}|>\phi_{0.5}(1+\varepsilon)\}$, and the roll-off transition region $\mathcal{A}_3=\{\phi\big|\ \phi_{0.5}(1-\varepsilon)<|\phi-\phi_{\rm c}|\leq \phi_{0.5}(1+\varepsilon)\}$. 
To obtain the desired broad coverage, we can adopt different weights, i.e., ${\bar \gamma}_{\imath}\in{\mathbb R}^+$, ${\imath}=1,2,3$, for the three regions according to their priorities. Therefore, we define the weighted Euclidean distance between ${\bf  y}$ and ${\bf  f}$ as follows
\begin{align}\label{D}
D({\bf  f},{\bf  y})\triangleq \left[\sum_{j=0}^{\kappa M-1}\gamma_j\left({\bf f}_j-{\bf  y}_j\right)^2\right]^{1/2},
\end{align}
\noindent where the index $j$ corresponds to the discrete angle ${\bar \phi}_j$, 
and the weight $\gamma_j$ is defined as follows
\begin{align}\label{eq:weight}
\gamma_j\triangleq
\begin{cases}
{\bar \gamma}_{\imath},&\ {\bar \phi}_j\in\mathcal{A}_{\imath}, \ {\imath}=1,3\\
{\bar \gamma}_2,&\ {\bar \phi}_j\in\mathcal{A}_2 \ \text{and}\  {\bf y}_j>{\bf f}_j\\
0,&\ {\bar \phi}_j\in\mathcal{A}_2\ \text{and}\  {\bf y}_j\leq{\bf  f}_j.
\end{cases}
\end{align}
\noindent Note that the side lobes that are lower than the target pattern, i.e., ${\bf y}_j\leq{\bf  f}_j$, are not included.
By defining the weight matrix as ${\bf \Gamma}\triangleq{\rm diag}\left(\sqrt{\gamma_0},\sqrt{\gamma_1},...,\sqrt{\gamma_{\kappa M-1}}\right)$, we rewrite the weighted Euclidean distance in (\ref{D}) by a compact form as follows
\begin{align}\label{eq:D2}
D({\bf  f},{\bf  y})=\left\|{\bf \Gamma}\left({\bf f}-{\bf  y}\right)\right\|_2.
\end{align}

Equivalently, we can formulate the optimization problem of pattern synthesis as
\begin{align}
{\rm P1}:&\ \mathop{\min}_{{\bf W},{\bf \Theta}}\ \
J({\bf W},{\bf \Theta})
=\left\|{\bf \Gamma}\left[{\bf f}-{\bf  y}({\bf W},{\bf \Theta})\right]\right\|_2^2\\
&\ \ \ {\rm s.t.}\ \ \left\|{\bf W}\right\|_F^2=1,\label{eq:opcon2}\\
&\ \ \ \ \ \ \ \ \ |\theta_m|=1,\ m=0,1,...,M-1,\label{eq:opcon3}
\end{align}
\noindent where (\ref{eq:opcon2}) and (\ref{eq:opcon3}) are the BS transmit power constraint and the RIS phase shift constraint, respectively. 
Note that the normalization of the precoder, i.e., $\sum_{k=0}^{N_{c}-1}\left\|{\bf W}[k]\right\|_F^2=N_{\rm c}$ defined in (\ref{eq:signal}), reduces to (\ref{eq:opcon2}) due to the frequency-independent precoder.

In problem P1, there exists two issues on the optimization.
Firstly, the variables ${\bf W}$ and ${\bf \Theta}$ are coupled in the cost function.
Secondly, the phase shifts of the RIS are constrained by a constant modulus, which makes the optimization intractable.
To address the above issues, we propose a design framework for solving problem P1 in the following section.

\section{Manifold Optimization For Pattern Synthesis}\label{sec:manifold}

\subsection{Problem Reformulation}\label{sec:reformulation}
As mentioned above, problem P1 is hard to tackle and needs to be reformulated.
Observing the constraint in (\ref{eq:opcon2}), we divide ${\bf W}$ by a normalization factor $\left\|{\bf W}\right\|_F$, and directly substitute the power constraint into the power pattern as follows
\begin{align}\label{eq:patternbar}
{\bf\bar  y}=\frac{M^2 N_{\rm BS}}{\|{\bf W}\|_F^2}{\rm diag}\left({\bf\widetilde A}{\bf \Theta}{\bf A}_{\bf G}{\bf I}\circ\left({\bf\Lambda}{\bf B}_{\bf G}^H{\bf W}
{\bf W}^H{\bf B}_{\bf G}\right){\bf A}_{\bf G}^H{\bf \Theta}^H{\bf\widetilde A}^H\right).
\end{align}
Therefore, problem P1 is reformulated as
\begin{align}
{\rm P2}:&\ \mathop{\min}_{{\bf W},{\bf \Theta}}\ \
J({\bf W},{\bf \Theta})
=\left\|{\bf \Gamma}\left[{\bf f}-{\bf \bar y}({\bf W},{\bf \Theta})\right]\right\|_2^2\label{eq:P2J}\\
&\ \ \ {\rm s.t.}\ \ |\theta_m|=1,\ m=0,1,...,M-1.\label{eq:opcon32}
\end{align}
It is observed that the variables ${\bf W}$ and ${\bf \Theta}$ are coupled in (\ref{eq:patternbar}) and (\ref{eq:P2J}). To handle this, we employ an alternating optimization method to obtain a locally optimal solution to problem P2, which iteratively minimizes the cost function with respect to ${\bf W}$ and ${\bf \Theta}$ while keeping the other one fixed. Therefore, we have the following subproblems with respect to ${\bf W}$ and ${\bf \Theta}$, respectively:
\begin{align}
{\rm P2.1}:&\ \mathop{\min}_{{\bf W}}\ \
J({\bf W})
=\left\|{\bf \Gamma}\left[{\bf f}-{\bf \bar y}({\bf W})\right]\right\|_2^2\label{eq:P22}\\
{\rm P2.2}:&\ \mathop{\min}_{{\bf \Theta}}\ \
J({\bf \Theta})
=\left\|{\bf \Gamma}\left[{\bf f}-{\bf \bar y}({\bf \Theta})\right]\right\|_2^2\label{eq:P21}\\
&\ \ \ {\rm s.t.}\ \ |\theta_m|=1,\ m=0,1,...,M-1.\label{eq:P21con}
\end{align}

Given the non-convex constant modulus constraint in (\ref{eq:P21con}), it is difficult to solve problem P2.2 by directly using conventional method of pattern synthesis over the Euclidean space.
To overcome this problem, we regard the constant modulus constraint as a Riemannian manifold such that we can search the solution over the manifold.
In the following subsections, we first optimize the precoder ${\bf W}$ in problem P2.1 by using the conventional CG method. Next, we propose a manifold based CG method for problem P2.2 to optimize the phase shifts ${\bf \Theta}$ of the RIS. Then, we summarize the alternating optimization for ${\bf W}$ and ${\bf \Theta}$. 

\subsection{Optimization for ${\bf W}$}

For problem P2.1 with a given ${\bf \Theta}$, we can employ the conventional CG method to search the optimal ${\bf W}$\cite{bertsekas1997}, where the Euclidean gradient of the cost function $J({\bf W})$ is derived in the following proposition.
\begin{pro}\label{pro:EucgradW}
	The Euclidean gradient of $J({\bf W})$ has the following form:
	\begin{align}\label{eq:gradW}
	\frac{\partial J}{\partial{\bf W}^*}
	=&\frac{2{\bf W}}{\|{\bf W}\|_F^2}
	\left[{\rm tr}\left({\bf \Gamma}^2{\bf f}{\bf\bar y}^H\right)
	-\left\|{\bf \Gamma}{\bf\bar y}\right\|_2^2\right]\nonumber\\
	&+\frac{2M^2N_{\rm BS}}{\|{\bf W}\|_F^2}
	{\bf B}_{\bf G}{\bf I}\circ\left[{\bf A}_{\bf G}^H{\bf \Theta}^H{\bf\widetilde A}^H{\bf \Gamma}^2{\rm diag}\left({\bf\bar y}-{\bf f}\right){\bf\widetilde A}{\bf \Theta}{\bf A}_{\bf G}\right]{\bf\Lambda}{\bf B}_{\bf G}^H{\bf W}.
	\end{align}
\end{pro}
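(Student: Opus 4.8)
The plan is to treat $J({\bf W})$ as a real-valued function of the complex matrix ${\bf W}$, compute its Wirtinger differential, and read off the conjugate gradient $\partial J/\partial{\bf W}^*$ as the coefficient of $d{\bf W}^H$. First I would expand the squared weighted distance; since ${\bf f}$, ${\bf\bar y}$, and ${\bf \Gamma}$ are real and ${\bf \Gamma}$ is diagonal, the constant term drops and
\begin{align}
dJ=2({\bf\bar y}-{\bf f})^T{\bf \Gamma}^2\,d{\bf\bar y},\nonumber
\end{align}
so the whole computation reduces to propagating $d{\bf\bar y}$ through (\ref{eq:patternbar}). Abbreviating ${\bf P}\triangleq{\bf\widetilde A}{\bf \Theta}{\bf A}_{\bf G}$ and ${\bf C}\triangleq{\bf\Lambda}{\bf B}_{\bf G}^H{\bf W}{\bf W}^H{\bf B}_{\bf G}$, the pattern reads ${\bf\bar y}=\frac{M^2N_{\rm BS}}{\|{\bf W}\|_F^2}{\rm diag}\big({\bf P}({\bf I}\circ{\bf C}){\bf P}^H\big)$, so the product rule splits $d{\bf\bar y}$ into a scalar contribution from $d\big(1/\|{\bf W}\|_F^2\big)$ and a matrix contribution from $d{\bf C}={\bf\Lambda}{\bf B}_{\bf G}^H\big(d{\bf W}\,{\bf W}^H+{\bf W}\,d{\bf W}^H\big){\bf B}_{\bf G}$.

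For the normalization term I would note that the surviving factor reassembles exactly into ${\bf\bar y}$, so contracting $2({\bf\bar y}-{\bf f})^T{\bf \Gamma}^2$ against it yields the scalar ${\rm tr}({\bf \Gamma}^2{\bf f}{\bf\bar y}^H)-\|{\bf \Gamma}{\bf\bar y}\|_2^2$. Writing $d\|{\bf W}\|_F^2={\rm tr}(d{\bf W}\,{\bf W}^H)+{\rm tr}({\bf W}\,d{\bf W}^H)$ and retaining only the $d{\bf W}^H$ part produces the first summand of (\ref{eq:gradW}), namely $\frac{2{\bf W}}{\|{\bf W}\|_F^2}\big[{\rm tr}({\bf \Gamma}^2{\bf f}{\bf\bar y}^H)-\|{\bf \Gamma}{\bf\bar y}\|_2^2\big]$.

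The crux is the matrix term, where the difficulty is to carry the diagonal weighting ${\bf d}\triangleq{\bf \Gamma}^2({\bf\bar y}-{\bf f})$ past the two nonstandard operators ${\rm diag}(\cdot)$ and ${\bf I}\circ(\cdot)$. I would first use ${\bf d}^T{\rm diag}({\bf X})={\rm tr}\big({\rm diag}({\bf d}){\bf X}\big)$ to turn the weighted inner product into a trace, then cycle ${\bf P}$ and ${\bf P}^H$ into the single matrix ${\bf S}\triangleq{\bf A}_{\bf G}^H{\bf \Theta}^H{\bf\widetilde A}^H{\bf \Gamma}^2{\rm diag}({\bf\bar y}-{\bf f}){\bf\widetilde A}{\bf \Theta}{\bf A}_{\bf G}$, and finally invoke the self-adjointness of the Hadamard projection under the trace pairing, ${\rm tr}\big({\bf S}({\bf I}\circ d{\bf C})\big)={\rm tr}\big(({\bf I}\circ{\bf S})\,d{\bf C}\big)$. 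Substituting $d{\bf C}$ and keeping only the ${\bf W}\,d{\bf W}^H$ half of the symmetric differential leaves, after one more cyclic shift, ${\rm tr}\big({\bf B}_{\bf G}({\bf I}\circ{\bf S}){\bf\Lambda}{\bf B}_{\bf G}^H{\bf W}\,d{\bf W}^H\big)$, whose coefficient is exactly the second summand of (\ref{eq:gradW}). Because $J$ is real, the $d{\bf W}$ and $d{\bf W}^H$ parts are complex conjugates, so reading $\partial J/\partial{\bf W}^*$ off the $d{\bf W}^H$ coefficient completes the derivation. The main obstacle is precisely this bookkeeping through ${\rm diag}(\cdot)$ and ${\bf I}\circ(\cdot)$: one must recognize that Hadamard-with-${\bf I}$ is its own adjoint in the trace inner product, and that of the two symmetric pieces of $d({\bf W}{\bf W}^H)$ only the ${\bf W}\,d{\bf W}^H$ piece feeds the derivative with respect to ${\bf W}^*$.
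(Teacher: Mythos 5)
Your proposal is correct and follows essentially the same route as the paper's Appendix A: Wirtinger matrix calculus with the conjugate gradient read off the $d{\bf W}^H$ coefficient, the product rule splitting the normalization factor $1/\|{\bf W}\|_F^2$ from the pattern-matrix dependence, and the Hadamard--trace identity ${\rm tr}\left({\bf A}({\bf B}\circ{\bf C})\right)={\rm tr}\left(({\bf A}\circ{\bf B}^T){\bf C}\right)$ (your ``self-adjointness of ${\bf I}\circ(\cdot)$'') to move the weighting past the diagonal extraction. The only difference is organizational: the paper first expands $J$ into the trace terms $c^2{\rm tr}({\bf Y}\circ{\bf Y})-2c\,{\rm tr}({\bf F}{\bf Y}^H)$ and differentiates each, while you differentiate the composite cost in one pass via $dJ=2({\bf\bar y}-{\bf f})^T{\bf \Gamma}^2\,d{\bf\bar y}$, which yields the identical result.
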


\begin{proof}
	See Appendix \ref{app:EucgradW}.
\end{proof}

\subsection{Manifold Optimization for ${\bf \Theta}$}

For problem P2.2 with a given ${\bf W}$, we first derive the Euclidean gradient of the cost function $J({\bf \Theta})$. 
Notice that ${\bf \Theta}$ is a structured matrix with all the non-diagonal entries equaling zero.  
For the ease of derivation, we regard the diagonal entries of ${\bf \Theta}$, i.e., ${\bm \theta}\triangleq{\rm diag}({\bf \Theta})$, as the optimization variable. 
The derivative with respect to the structured matrix is introduced in \cite{hjorungnes2011}, accordingly, we have the following lemma.
\begin{lem}\label{lem:gradient}
Assume that $J:{\mathbb C}^{M\times 1}\times {\mathbb C}^{M\times 1} \to {\mathbb C}$ is a scalar function of ${\bm \theta} \in {\mathbb C}^{M\times 1}$ and ${\bm \theta}^* \in {\mathbb C}^{M\times 1}$, and define ${\bf \Theta}\triangleq{\rm diag}({\bm \theta})$. Then, the derivatives of $J$ with respect to ${\bm \theta}^*$ and ${\bf \Theta}^*$ have the following relationship:
\begin{align}\label{eq:gradient}
\frac{\partial J}{\partial {\bm \theta}^*}={\rm diag}\left(\frac{\partial J}{\partial {\bf \Theta}^*}\right).
\end{align}
\end{lem}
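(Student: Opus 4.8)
The plan is to prove the identity by complex (Wirtinger) differential calculus together with the chain rule, exploiting the fact that a diagonal matrix is parametrized by exactly $M$ free complex variables, namely the entries of ${\bm \theta}$. The essential observation is that passing from the matrix derivative to the vector derivative amounts to restricting the admissible variations of ${\bf \Theta}$ to those that preserve its diagonal structure.

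First, I would regard $J$ as a function of the matrix argument ${\bf \Theta}$ and its conjugate ${\bf \Theta}^*$, and write its total differential as
\begin{align}
dJ=\sum_{i,j}\frac{\partial J}{\partial\Theta_{ij}}\,d\Theta_{ij}+\sum_{i,j}\frac{\partial J}{\partial\Theta_{ij}^*}\,d\Theta_{ij}^*,
\end{align}
where $\frac{\partial J}{\partial\Theta_{ij}^*}$ is the $(i,j)$ entry of the matrix derivative $\frac{\partial J}{\partial{\bf \Theta}^*}$. The next step is to substitute the structural constraint ${\bf \Theta}={\rm diag}({\bm \theta})$, i.e. $\Theta_{ij}=\theta_i\delta_{ij}$ with $\delta_{ij}$ the Kronecker delta. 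Because the off-diagonal entries are fixed at zero while each diagonal entry $\Theta_{ii}$ coincides with $\theta_i$, we have $d\Theta_{ij}=\delta_{ij}\,d\theta_i$ and $d\Theta_{ij}^*=\delta_{ij}\,d\theta_i^*$, so the double sums collapse to sums over the diagonal index only.

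Then I would compare the resulting expression with the differential of $J$ taken directly in the vector variable, $dJ=\sum_m\frac{\partial J}{\partial\theta_m}\,d\theta_m+\sum_m\frac{\partial J}{\partial\theta_m^*}\,d\theta_m^*$. Invoking the independence of the differentials $d{\bm \theta}$ and $d{\bm \theta}^*$ in Wirtinger calculus, I can match the coefficients of $d\theta_m^*$ on both sides to obtain the entrywise identity $\frac{\partial J}{\partial\theta_m^*}=\frac{\partial J}{\partial\Theta_{mm}^*}$ for every $m$. Stacking these $M$ scalar equations into a column vector is precisely the claim $\frac{\partial J}{\partial{\bm \theta}^*}={\rm diag}\left(\frac{\partial J}{\partial{\bf \Theta}^*}\right)$ in (\ref{eq:gradient}), since the right-hand ${\rm diag}(\cdot)$ extracts the diagonal of the matrix derivative.

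The main subtlety to handle carefully is the derivative-with-respect-to-a-structured-matrix convention borrowed from \cite{hjorungnes2011}: one must be explicit that ${\bf \Theta}$ carries only $M$ degrees of freedom, so that only the diagonal Wirtinger derivatives $\frac{\partial J}{\partial\Theta_{mm}^*}$ contribute while the off-diagonal entries play no role in the composite map $\bm\theta\mapsto J({\rm diag}(\bm\theta))$. Equivalently, I could phrase the chain rule directly as $\frac{\partial J}{\partial\theta_m^*}=\sum_{i,j}\frac{\partial J}{\partial\Theta_{ij}^*}\frac{\partial\Theta_{ij}^*}{\partial\theta_m^*}$ and note that $\frac{\partial\Theta_{ij}^*}{\partial\theta_m^*}=\delta_{im}\delta_{ij}$, together with $\frac{\partial\Theta_{ij}}{\partial\theta_m^*}=0$ by holomorphicity, which selects exactly the $(m,m)$ term. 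This is the only point requiring care; the remainder is routine bookkeeping.
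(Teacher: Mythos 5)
Your proof is correct and follows essentially the same route as the paper: both arguments view $J$ as the composition of the unpatterned (independent-entry) function of ${\bf \Theta}^*$ with the parametrization ${\bm \theta}^*\mapsto{\rm diag}({\bm \theta}^*)$ and apply the complex chain rule, so that only the diagonal Wirtinger derivatives survive. The sole difference is notational: where the paper works with ${\rm vec}(\cdot)$ and the selection matrix ${\bf L}_{\rm D}$ satisfying ${\bf L}_{\rm D}^T{\rm vec}({\bf A})={\rm diag}({\bf A})$, you carry out the identical selection entrywise via the Kronecker deltas $\partial\Theta_{ij}^*/\partial\theta_m^*=\delta_{im}\delta_{ij}$.
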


\begin{proof}
See Appendix \ref{app:lemgradient}.
\end{proof}

Based on \emph{Lemma \ref{lem:gradient}}, we derive the Euclidean gradient of the cost function of problem P2.2 in the following proposition.
\begin{pro}\label{pro:Eucgradtheta}
The Euclidean gradient of $J({\bm \theta})$ has the following form:
\begin{align}\label{eq:Eucgradtheta}
\frac{\partial J}{\partial{\bm \theta}^*}
&=\frac{2M^2 N_{\rm BS}}{\|{\bf W}\|_F^2}{\rm diag}\left({\bf\widetilde A}^H{\bf \Gamma}^2{\rm diag}\left({\bf\bar y}-{\bf f}\right){\bf\widetilde A}{\bf \Theta}{\bf V}\right),
\end{align}
\noindent where ${\bf V}={\bf A}_{\bf G}{\bf I}\circ\left({\bf\Lambda}{\bf B}_{\bf G}^H{\bf W}
{\bf W}^H{\bf B}_{\bf G}\right){\bf A}_{\bf G}^H$.
\end{pro}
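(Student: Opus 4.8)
The plan is to compute the full-matrix Wirtinger gradient $\partial J/\partial{\bf \Theta}^*$ first, treating ${\bf \Theta}$ momentarily as an unstructured complex matrix, and then to invoke \emph{Lemma \ref{lem:gradient}} to restrict it to the diagonal parametrization ${\bm \theta}$. To set up, I would write the reflected pattern in the compact form ${\bf\bar y}=c\,{\rm diag}({\bf\widetilde A}{\bf \Theta}{\bf V}{\bf \Theta}^H{\bf\widetilde A}^H)$ with $c\triangleq M^2 N_{\rm BS}/\|{\bf W}\|_F^2$ and ${\bf V}={\bf A}_{\bf G}{\bf I}\circ({\bf \Lambda}{\bf B}_{\bf G}^H{\bf W}{\bf W}^H{\bf B}_{\bf G}){\bf A}_{\bf G}^H$. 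The first thing to record is that ${\bf V}$ is Hermitian, because the Hadamard factor ${\bf I}\circ({\bf \Lambda}{\bf B}_{\bf G}^H{\bf W}{\bf W}^H{\bf B}_{\bf G})$ is a real nonnegative diagonal matrix (the product of the real diagonal ${\bf \Lambda}$ with the diagonal of a positive semidefinite matrix); consequently each entry ${\bf\bar y}_j$ is a Hermitian form and hence real, so $J=\sum_j\gamma_j({\bf f}_j-{\bf\bar y}_j)^2$ is a genuine real-valued cost to which Wirtinger calculus applies.

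Next I would take the differential of $J$. Since the ${\bf f}_j$ and ${\bf\bar y}_j$ are real, $dJ=2\sum_j\gamma_j({\bf\bar y}_j-{\bf f}_j)\,d{\bf\bar y}_j$, and collecting the scalar weights into the diagonal matrix ${\bf D}\triangleq{\bf \Gamma}^2{\rm diag}({\bf\bar y}-{\bf f})$ turns this into the trace form $dJ=2c\,{\rm tr}\big({\bf D}\,d{\bf P}\big)$, where ${\bf P}\triangleq{\bf\widetilde A}{\bf \Theta}{\bf V}{\bf \Theta}^H{\bf\widetilde A}^H$; the identity ${\rm tr}({\bf D}\,d{\bf P})=\sum_j{\bf D}_{jj}\,d{\bf P}_{jj}$ for diagonal ${\bf D}$ is what bridges the entrywise sum and the trace. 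Retaining only the part of $d{\bf P}$ that carries $d{\bf \Theta}^*$, i.e. $d{\bf P}\big|_{{\bf \Theta}^*}={\bf\widetilde A}{\bf \Theta}{\bf V}\,(d{\bf \Theta}^H)\,{\bf\widetilde A}^H$, and using the cyclic property together with ${\rm tr}({\bf M}(d{\bf \Theta}^*)^T)={\rm tr}({\bf M}^T\,d{\bf \Theta}^*)$, I would bring $dJ\big|_{{\bf \Theta}^*}$ into the canonical shape ${\rm tr}\big(2c\,{\bf M}^T\,d{\bf \Theta}^*\big)$ with ${\bf M}={\bf\widetilde A}^H{\bf D}{\bf\widetilde A}{\bf \Theta}{\bf V}$, from which $\partial J/\partial{\bf \Theta}^*=2c\,{\bf M}$ can be read off.

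Finally, applying \emph{Lemma \ref{lem:gradient}} gives $\partial J/\partial{\bm \theta}^*={\rm diag}(\partial J/\partial{\bf \Theta}^*)$, which is exactly the claimed expression once ${\bf D}$ and $c$ are expanded. I expect the main obstacle to be bookkeeping rather than any deep step: one must treat ${\bf \Theta}$ as a free matrix so that the standard complex differential rules of \cite{hjorungnes2011} apply, compute the unstructured gradient, and only afterwards impose the diagonal constraint through \emph{Lemma \ref{lem:gradient}} --- mishandling the structure at an earlier stage would inject spurious off-diagonal contributions. The other delicate point is staying consistent with the chosen Wirtinger layout when passing from ${\rm tr}({\bf K}\,d{\bf \Theta}^*)$ to the gradient matrix, and correctly tracking the ${\rm diag}(\cdot)$ operator as it alternates between returning a vector and a diagonal matrix; getting the transpose and conjugate placements and the single factor of $2$ (which comes from the square in $J$, not from summing two conjugate halves) right is where the care is needed.
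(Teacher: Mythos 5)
Your proposal is correct and takes essentially the same route as the paper's Appendix C: both compute the Wirtinger gradient with respect to the unstructured matrix ${\bf \Theta}^*$ via complex matrix differentials and trace cyclicity, and then invoke \emph{Lemma \ref{lem:gradient}} to collapse to the diagonal parametrization ${\bm \theta}$. The only difference is organizational — the paper first expands $J$ into the terms ${\rm tr}({\bf Y}\circ{\bf Y})$ and ${\rm tr}({\bf F}{\bf Y}^H)$ and differentiates each using the identity ${\rm tr}({\bf A}({\bf B}\circ{\bf C}))={\rm tr}(({\bf A}\circ{\bf B}^T){\bf C})$ before recombining into ${\bf \Gamma}^2{\rm diag}({\bf\bar y}-{\bf f})$, whereas you obtain that weight matrix in one step by the chain rule on the squared distance, with your Hermitianity check on ${\bf V}$ supplying the (implicit in the paper) justification that ${\bf\bar y}$ is real.
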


\begin{proof}
See Appendix \ref{app:Eucgradtheta}.
\end{proof}

Next, regarding the non-convex constant modulus constraint in (\ref{eq:P21con}), we apply the Euclidean gradient derived in \emph{Proposition \ref{pro:Eucgradtheta}} over a Riemannnian manifold as follows
\begin{align}\label{eq:RieM}
{\mathcal M} \triangleq
\left\{{\bm \theta}\in{\mathbb C}^{M\times 1}:|{\theta}_m|=1,\ m=0,1,...,M-1\right\}.
\end{align}
\noindent By treating ${\mathbb C}$ as ${\mathbb R}^2$, we define the inner product as $\langle \theta_1,\theta_2\rangle\triangleq \Re[\theta_1\theta_2^*]$ \cite{dujb2019}. Then, the corresponding tangent space ${\mathcal T}_{\bm \theta}{\mathcal M}$ at the point ${\bm \theta}$ can be defined as \cite{Yu2016}
\begin{align}\label{eq:TM}
{\mathcal T}_{\bm \theta}{\mathcal M}\triangleq
\left\{{\bf x}\in{\mathbb C}^{M\times 1}:\Re[{\bf x}\circ{\bm \theta}^*]={\bf 0}\right\}.
\end{align}

Given the definition of the Riemannian manifold, we can employ the CG method to search a local optimum by treating ${\mathcal M}$ as the search space. Before introducing the algorithm, we define two kinds of projections towards the tangent space and the manifold, respectively.

Firstly, we project the Euclidean gradient onto the tangent space ${\mathcal T}_{\bm \theta}{\mathcal M}$, which defines the Riemannian gradient $\nabla_{\bm \theta}J$. Specifically, to project a vector ${\bf d}$ onto the tangent space ${\mathcal T}_{\bm \theta}{\mathcal M}$, we use the following orthogonal projection \cite{dujb2019}:
\begin{align}\label{eq:maptoTM}
{\rm P}_{\bm \theta}\left({\bf d}\right)={\bf d}-\Re\left[{\bf d}\circ{\bm \theta}^*\right]\circ{\bm \theta}.
\end{align}
\noindent We apply this projection to $\partial J({\bm \theta})/\partial {\bm \theta}^*$, and obtain the Riemannian gradient as follows
\begin{align}\label{eq:Riegrad}
\nabla_{\bm \theta}J
={\rm P}_{\bm \theta}\left(\frac{\partial J({\bm \theta})}{\partial {\bm \theta}^*}\right)
=\frac{\partial J({\bm \theta})}{\partial {\bm \theta}^*}
-\Re\left[\frac{\partial J({\bm \theta})}{\partial {\bm \theta}^*}\circ{\bm \theta}^*\right]\circ{\bm \theta}.
\end{align}

After updating the point with the Riemannian gradient, we employ another projection called retraction to project the updated point ${\bf x}$ back onto the manifold, which is defined as
\begin{align}\label{eq:retraction}
{\rm Ret}({\bf x})\triangleq\left[\frac{{x}_0}{|{x}_0|},\frac{{x}_1}{|{x}_1|},...,\frac{{x}_{M-1}}{|{x}_{M-1}|}\right].
\end{align}

With the above definitions of the Riemannian manifold, we propose a manifold based CG method to optimize the phase shifts of the RIS in problem P2.2, which is shown in Algorithm \ref{alg:P21}. 
In the light of \cite[Theorem 4.3.1]{absil2009}, Algorithm \ref{alg:P21} is guaranteed to converge to a critical point.
\begin{algorithm}[h]
  \caption{Riemannian Manifold Based CG Algorithm}
  \label{alg:P21}
  \begin{algorithmic}[1]
    \Require ${\bm \theta}_0$
    \State Set the initial search direction ${\bf d}_0=-\nabla_{{\bm \theta}_0}J$ and $t=0$
    \Repeat
    \State \hspace{-6mm}Choose the step size $\omega_t$ via the Armijo backtracking line search \cite{absil2009}\label{alg:armijo}
    \State \hspace{-6mm}Update the vector: ${\bm \theta}_{t+1}={\rm Ret}({\bm \theta}_{t}+\omega_t{\bf d}_t)$
    \State \hspace{-6mm}Calculate  the Riemannian gradient $\nabla_{{\bm \theta}_{t+1}}J$ according to (\ref{eq:Eucgradtheta}) and (\ref{eq:Riegrad})
    \State \hspace{-6mm}Calculate the Polak-Ribi{\`e}re parameter \cite{absil2009}:\qquad\qquad\qquad
    \Statex ${\mathcal B}_{t+1}=\frac{\left(\nabla_{{\bm \theta}_{t+1}}J\right)^H\left[\nabla_{{\bm \theta}_{t+1}}J-{\rm P}_{{\bm \theta}_{t+1}}\left(\nabla_{{\bm \theta}_{t}}J\right)\right]}{\|\nabla_{{\bm \theta}_{t}}J\|_2^2}$
    \State\hspace{-6mm}Calculate the conjugate direction: \qquad\qquad
    \Statex ${\bf d}_{t+1}=-\nabla_{{\bm \theta}_{t+1}}J+{\mathcal B}_{t+1}{\rm P}_{{\bm \theta}_{t+1}}\left({\bf d}_t\right)$
   \State \hspace{-6mm}$t\leftarrow t+1$
    \Until{convergence.}
  \end{algorithmic}
\end{algorithm}

In Step \ref{alg:armijo} of Algorithm \ref{alg:P21}, given the scalars $l,\gamma\in(0,1)$ and $q>0$, the Armijo step size is $\omega_t=ql^n$, where $n$ is the smallest nonnegative integer that satisfies \cite[Definition 4.2.2]{absil2009}
\begin{align}\label{eq:armijo}
\hspace{-2mm}J({\bm\theta}_t)-J\left({\rm Ret}({\bm\theta}_t+ql^n{\bf d}_t)\right)
\geq -\gamma ql^n  \Re\left[\left(\nabla_{{\bm\theta}_t}J\right)^H {\bf d}_t \right].
\end{align}

\subsection{Alternating Optimization for Pattern Synthesis}
We summarize the alternating optimization for problem P2 in Algorithm \ref{alg:altopt}.
\begin{algorithm}[h]
  \caption{Alternating Optimization for the Manifold Based Pattern Synthesis of the RIS}
  \label{alg:altopt}
  \begin{algorithmic}[1]
    \Require ${\bm \theta}^{(0)}$, ${\bf W}^{(0)}$
    \State Set $t'=0$ 
    \Repeat
    \State \hspace{-6mm}Fix ${\bm \theta}^{(t')}$, optimize ${\bf W}^{(t'+1)}$ using the CG method and \emph{Proposition} \ref{pro:EucgradW}
    \State \hspace{-6mm}Fix ${\bf W}^{(t'+1)}$, optimize ${\bm \theta}^{(t'+1)}$ using Algorithm \ref{alg:P21}
   \State \hspace{-6mm}$t'\leftarrow t'+1$
    \Until{convergence.}
  \end{algorithmic}
\end{algorithm}

In each iteration of Algorithm \ref{alg:altopt}, the value of the cost function $J$ is non-increasing, non-negative, and locally optimal, therefore, the alternating optimization is guaranteed to converge to a locally optimal solution.

\begin{rmk}
The design framework in Algorithm \ref{alg:altopt} considers the general multipath model for the BS-RIS channel, and jointly optimizes the precoder at the BS and the phase shifts of the RIS, which encompasses the special case discussed in \cite{Hai2021} and \cite{He2021} that adopted the LoS BS-RIS channel model to optimize the phase shifts of the RIS.
\end{rmk}

In fact, under the multipath BS-RIS channel, the reflected power pattern cannot be synthesized by directly using the sub-array based method proposed in \cite{Hai2021}.
To reveal the difference, we rewrite the power pattern in (\ref{eq:avepattern}) as follows
\begin{align}\label{eq:avepattern2}
y(\phi)
&=M^2 N_{\rm BS}\sum_{l=0}^{L-1}\chi_l{\bf a}_{\bf H}^H(\phi){\bf \Theta}
{\bf a}_{\bf G}(\phi_l){\bf a}_{\bf G}^H(\phi_l)
{\bf \Theta}^H{\bf a}_{\bf H}(\phi),
\end{align}
\noindent where $\chi_l$ denotes the $l$-th diagonal entry of the matrix ${\bf I}\circ\left({\bf\Lambda}{\bf B}_{\bf G}^H{\bf W}
{\bf W}^H{\bf B}_{\bf G}\right)$, and ${\bf a}_{\bf G}(\phi_l)$ is the $l$-th column of ${\bf A}_{\bf G}$ as defined in (\ref{eq:Gcom}). From (\ref{eq:avepattern2}), we find that the power pattern reflected by the RIS is the superposition of serveral patterns, each of which corresponds to an impinging channel path with the AoA of $\phi_l$. 
Accordingly, we have the following proposition.

\begin{pro}\label{pro:shiftAOA}
	Given the phase shifts at the RIS, changing the AoA of the impinging signal leads to a shift in the AoD of the flat-top beam. Specifically, 
	assume that the RIS reflects the signal from the AoA of $\phi_0$ towards a flat-top region that covers $\mathcal A_1=[\phi_{\rm min},\phi_{\rm max}],\ \phi_{\rm min},\phi_{\rm max}\in[\pi/2,\pi]$.
	Then, changing the AoA of the impinging signal to another angle $\phi_1\ (\phi_1\neq\phi_0)$ results in a shifted flat-top region $\mathcal A'_1=[\phi'_{\rm min},\phi'_{\rm max}]$, where $\phi'_{\rm min}$ and $\phi'_{\rm max}$ are defined by
	\begin{align}\label{eq:A1shift}
	\begin{cases}
	\phi'_{\rm min}=\arccos(\cos\phi_{\rm min}+\xi),\phi'_{\rm max}=\arccos\left(\max\{-1,\cos\phi_{\rm max}+\xi\}\right),&\ \phi_1<\phi_0\\
	\phi'_{\rm min}=\arccos\left(\cos\phi_{\rm min}+\xi\right),\phi'_{\rm max}=\arccos\left(\cos\phi_{\rm max}+\xi\right),&\ \phi_1>\phi_0,
	\end{cases}
	\end{align}
	\noindent where $\xi\triangleq\cos\phi_0-\cos\phi_1$.
\end{pro}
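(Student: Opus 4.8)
The plan is to show that each single-path term appearing in the reflected power pattern (\ref{eq:avepattern2}) depends on the observation angle $\phi$ and the incident angle $\phi_l$ only through the combined spatial frequency $\cos\phi+\cos\phi_l$. Once this is established, the flat-top shape is a fixed function of this single variable, and changing the AoA merely translates the flat-top region along the $\cos\phi$-axis; the claimed formulas then follow by a change of variable and the monotonicity of $\arccos$.

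First I would substitute the array response vectors (\ref{eq:aH}) and (\ref{eq:aG}) with $\rho=\lambda/2$ into the inner product ${\bf a}_{\bf H}^H(\phi){\bf \Theta}{\bf a}_{\bf G}(\phi_l)$. Writing $\theta_m=e^{\jmath\varphi_m}$, this collapses to $\frac{1}{M}\sum_{m=0}^{M-1}\theta_m e^{-\jmath\pi m(\cos\phi+\cos\phi_l)}$, a function of $u\triangleq\cos\phi+\cos\phi_l$ alone. Hence, for fixed ${\bf \Theta}$, the per-path power $\left|{\bf a}_{\bf H}^H(\phi){\bf \Theta}{\bf a}_{\bf G}(\phi_l)\right|^2$ equals a shape function $F(u)$ that is determined solely by the phase shifts and is independent of how $u$ is split between $\phi$ and $\phi_l$.

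Next I would characterize the flat-top region as the set of $\phi$ on which $F(u)$ attains its top level. For AoA $\phi_0$ the region $\mathcal A_1=[\phi_{\rm min},\phi_{\rm max}]$ corresponds, since $\cos$ is strictly decreasing on $[\pi/2,\pi]$, to the $u$-interval $[\cos\phi_{\rm max}+\cos\phi_0,\ \cos\phi_{\rm min}+\cos\phi_0]$. Because $F$ is unchanged when the AoA becomes $\phi_1$, the new flat-top region is exactly the set of $\phi$ for which $\cos\phi+\cos\phi_1$ lands in this same $u$-interval, i.e. $\cos\phi\in[\cos\phi_{\rm max}+\xi,\ \cos\phi_{\rm min}+\xi]$ with $\xi=\cos\phi_0-\cos\phi_1$. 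Applying $\arccos$, which is strictly decreasing, reverses the orientation so that the larger shifted cosine yields the smaller angle; this gives $\phi'_{\rm min}=\arccos(\cos\phi_{\rm min}+\xi)$ and $\phi'_{\rm max}=\arccos(\cos\phi_{\rm max}+\xi)$.

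The delicate point is the domain of $\arccos$: its argument must lie in $[-1,1]$ because $\phi$ is confined to the visible range $[0,\pi]$, whose boundary is $\cos\phi=-1$ (that is, $\phi=\pi$). When $\phi_1>\phi_0$ we have $\xi>0$, both shifted cosines remain within $[-1,1]$ (using $\cos\phi_{\rm min}\le 0$), and no correction is needed, which reproduces the second case of (\ref{eq:A1shift}). When $\phi_1<\phi_0$ we have $\xi<0$, the beam shifts toward $\phi=\pi$, and the smaller endpoint $\cos\phi_{\rm max}+\xi$ may fall below $-1$; clipping it at the visible boundary, i.e. taking $\max\{-1,\cos\phi_{\rm max}+\xi\}$, reproduces the first case of (\ref{eq:A1shift}). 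Correctly tracking which angular endpoint is clipped under the orientation reversal induced by $\arccos$ is the only subtle step; everything else is the direct change of variable described above.
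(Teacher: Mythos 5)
Your proof is correct and takes essentially the same approach as the paper's: both rest on the observation that ${\bf a}_{\bf H}^H(\phi){\bf \Theta}{\bf a}_{\bf G}(\phi_l)$ depends on the angles only through the combined spatial frequency $\cos\phi+\cos\phi_l$, so changing the AoA translates the pattern along the $\cos\phi$-axis, and solving $\cos\phi'=\cos\phi+\xi$ at the two endpoints with clipping at the visible boundary $\cos\phi=-1$ (i.e., $\phi'_{\rm max}=\pi$) in the $\phi_1<\phi_0$ case yields (\ref{eq:A1shift}). Your bookkeeping of the orientation reversal under $\arccos$ and of which endpoint gets clipped matches the paper's endpoint argument, including the same implicit assumption (shared with the paper) that for $\phi_1>\phi_0$ the shift $\xi$ is small enough that $\cos\phi_{\rm min}+\xi\leq 1$.
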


\begin{proof}
	See Appendix \ref{app:shiftAOA}.
\end{proof}

\begin{rmk}\label{rmk:shiftAOA} 
	It is observed in \emph{Proposition \ref{pro:shiftAOA}} that, $\phi'_{\rm min}>\phi_{\rm min}$ and $\phi'_{\rm max}>\phi_{\rm max}$ hold for $\phi_1<\phi_0$, while $\phi'_{\rm min}<\phi_{\rm min}$ and $\phi'_{\rm max}<\phi_{\rm max}$ hold for $\phi_1>\phi_0$, which can be explained by the law of reflection.  Moreover, the range of the shifted flat-top region is not necessarily equal to that of the original flat-top region, i.e., $\phi'_{\rm max}-\phi'_{\rm min}\neq\phi_{\rm max}-\phi_{\rm min}$, while the shape of the flat-top beam holds.
	Note that a similar conclusion was obtained in \cite{Jamali2021} by numerical simulations, while it is strictly proved in this paper. 
\end{rmk}

From (\ref{eq:avepattern2}) and \emph{Proposition \ref{pro:shiftAOA}}, we find that, multiple impinging channel paths with different AoAs turn the reflected pattern towards different AoDs, which leads to interference and scattering. 
Therefore, under the multipath BS-RIS channel, we need to synthesize the superposition of multiple reflected patterns instead of designing a single pattern based on the LoS BS-RIS channel. To this end, we propose the general design framework, i.e., Algorithm \ref{alg:altopt}, in this paper.

Moreover, by using the proposed RIS-assisted broad coverage, we can provide performance enhancements in the mmWave communications of both the broadcast channel and the OFDMA channel.
The detailed discussions of the applications and the corresponding performance analysis are given in the next section.

\section{Applications and Performance Analysis of RIS-assisted Broad Coverage}\label{sec:applicationandperformance}

\subsection{Communications of Broadcast Channels}

As for the communication of broadcast channels in an RIS-assisted massive MIMO-OFDM system, we derive the downlink rate of transmitting multiple streams as follows
\begin{align}\label{eq:rate0}
R=\sum_{k=0}^{N_{\rm c}-1}\log_2\det\left({\bf I}_{N_{\rm UE}}+\frac{p}{\sigma_z^2}{\bf H}_{\rm eq}[k]{\bf W}[k]{\bf W}^H[k]{\bf H}_{\rm eq}^H[k]\right),
\end{align}
\noindent where ${\bf H}_{\rm eq}[k]\triangleq\sqrt{\beta_1 \beta_2}{\bf H}[k]{\bf \Theta}{\bf G}[k]+\sqrt{\beta}{\bf H}_{\rm d}[k]$. It is difficult to derive the analytical ergodic rate from (\ref{eq:rate0}). Therefore, we verify the performance of the proposed broad coverage via numerical simulations in Section \ref{sec:numerical}.
Nevertheless, we derive an analytical expression of the average received signal power at the user by ignoring the blocked BS-UE direct channel in the following theorem.
\begin{thm}\label{thm:flattop}
Assume that the perfect power pattern is synthesized in accordance with (\ref{eq:targetf}), i.e., $y(\phi)=f(\phi)$, the side lobe $f_S=0$, and the rectangular roll-off $\mathcal{A}_3=\emptyset$, each user within the broad coverage receives an average power, across all the antennas and subcarriers, as follows
\begin{align}\label{eq:UEpower}
{\mathbb E}\{|r|^2\}=p \beta_1 \beta_2 f_{\rm M}\sum_{\phi_q\in\mathcal{A}_1}\mathbb{E}\left\{|\alpha_q|^2\right\}+\sigma_z^2,
\end{align}
\noindent where $\mathcal{A}_1$ is the flat-top region of the power pattern defined in Section \ref{subsec:formulation}, and $\sum_{\phi_q\in\mathcal{A}_1}\mathbb{E}\left\{|\alpha_q|^2\right\}$ denotes the average power of all the channel paths within the broad coverage $\mathcal{A}_1$. 
Furthermore, without loss of generality, we assume that the channel power of NLoS paths is uniformly distributed over $[0,\pi]$, therefore, we find that each user within the broad coverage receives a constant average power as follows
\begin{align}\label{eq:UEpower2}
{\mathbb E}\{|r|^2\}=p \beta_1 \beta_2 f_{\rm M}\left[\frac{K}{K+1}+\frac{|\mathcal{A}_1|}{\pi(K+1)}\right]+\sigma_z^2,
\end{align}
\noindent where $K$ is the K-factor of the RIS-UE channel, and $|\mathcal{A}_1|$ denotes the beamwidth of the power pattern reflected by the RIS.
\end{thm}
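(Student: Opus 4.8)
The plan is to compute the total average received power ${\mathbb E}\{\|{\bf r}[k]\|_2^2\}$ directly from (\ref{eq:signal}), drop the blocked direct term ${\bf H}_{\rm d}[k]$ as instructed, and normalize per antenna at the end. Because ${\bf z}[k]$ is zero-mean and independent of the data, the signal and noise parts decouple and the noise contributes $N_{\rm UE}\sigma_z^2$. Using ${\mathbb E}\{{\bf s}[k]{\bf s}^H[k]\}={\bf I}_{N_{\rm d}}$, the signal part reduces to $p\beta_1\beta_2\,{\mathbb E}\{\|{\bf H}[k]{\bf \Theta}{\bf G}[k]{\bf W}[k]\|_F^2\}$, so the whole result hinges on this single Frobenius-norm expectation.

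First I would insert the multipath expansion (\ref{eq:Hfreq}) of the RIS-UE channel and write ${\bf H}[k]{\bf \Theta}{\bf G}[k]{\bf W}[k]=\sqrt{N_{\rm UE}M}\sum_q\delta_q[k]{\bf b}_{\bf H}(\psi'_q){\bf u}_q[k]$, where ${\bf u}_q[k]\triangleq{\bf a}_{\bf H}^H(\phi_q){\bf \Theta}{\bf G}[k]{\bf W}[k]$ is exactly the inner quantity of the pattern (\ref{eq:pattern}). Expanding the Frobenius norm yields a double sum over path pairs $(q,q')$ weighted by ${\mathbb E}\{\delta_q[k]\delta_{q'}^*[k]\}$ and ${\bf b}_{\bf H}^H(\psi'_{q'}){\bf b}_{\bf H}(\psi'_q)$. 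The uncorrelated-fading assumption annihilates every off-diagonal pair since ${\mathbb E}\{\alpha_q\alpha_{q'}^*\}=0$ for $q\neq q'$, and the unit-norm steering vectors give ${\bf b}_{\bf H}^H(\psi'_q){\bf b}_{\bf H}(\psi'_q)=1$ on the diagonal. The pivotal observation is that each diagonal term obeys ${\bf u}_q[k]{\bf u}_q^H[k]=\|{\bf a}_{\bf H}^H(\phi_q){\bf \Theta}{\bf G}[k]{\bf W}[k]\|_2^2=y(\phi_q,k)/M$, so taking the remaining expectation over the BS-RIS channel converts it into the average pattern $y(\phi_q)$ of (\ref{eq:avepattern}); the factor $M$ then cancels the $M$ in $\sqrt{N_{\rm UE}M}$, leaving ${\mathbb E}\{\|{\bf H}[k]{\bf \Theta}{\bf G}[k]{\bf W}[k]\|_F^2\}=N_{\rm UE}\sum_q{\mathbb E}\{|\alpha_q|^2\}\,y(\phi_q)$. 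Substituting the perfectly synthesized pattern $y(\phi)=f(\phi)$ with $f_{\rm S}=0$ and $\mathcal{A}_3=\emptyset$ collapses the sum to paths with $\phi_q\in\mathcal{A}_1$ each weighted by $f_{\rm M}$, and dividing by $N_{\rm UE}$ for the per-antenna average delivers (\ref{eq:UEpower}).

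For the constant-power statement (\ref{eq:UEpower2}), I would evaluate $\sum_{\phi_q\in\mathcal{A}_1}{\mathbb E}\{|\alpha_q|^2\}$ under the stated statistics. Normalizing the total RIS-UE channel power to unity, the K-factor splits it into a LoS fraction $K/(K+1)$ and an aggregate NLoS fraction $1/(K+1)$. Since the RIS is steered so that the user's dominant LoS direction lies inside the coverage, the LoS path contributes $K/(K+1)$ in full. Modeling the NLoS power as uniform over $[0,\pi]$, the expected NLoS fraction lying inside $\mathcal{A}_1$ equals the length ratio $|\mathcal{A}_1|/\pi$, contributing $|\mathcal{A}_1|/[\pi(K+1)]$. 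Summing the two pieces and inserting into (\ref{eq:UEpower}) yields (\ref{eq:UEpower2}), which no longer depends on the particular user position and is therefore constant across the coverage.

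The step I expect to be the main obstacle is the bridge in the second paragraph: correctly juggling the two independent layers of averaging --- over the RIS-UE gains and angles on one hand, and over the BS-RIS channel that defines the pattern on the other --- so the received-power diagonal terms can be identified cleanly with the already-synthesized average pattern $y(\phi_q)$. Handling the cross-term cancellation carefully (including the LoS path, whose nonzero mean still vanishes against zero-mean, independent NLoS gains) and tracking the $\sqrt{N_{\rm UE}M}$ normalization so the $M$ cancels are the delicate points; everything after the substitution $y=f$ is routine bookkeeping.
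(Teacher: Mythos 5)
Your proof is correct and follows essentially the same route as the paper's own proof in Appendix~\ref{app:lemflattop}: expand the RIS-UE channel into its paths, annihilate the cross terms via the uncorrelated-fading assumption, identify each diagonal term with the average pattern $y(\phi_q)$ of (\ref{eq:avepattern}) so that substituting $y(\phi)=f(\phi)$ with $f_{\rm S}=0$ and $\mathcal{A}_3=\emptyset$ collapses the sum, and then apply the identical K-factor/uniform-NLoS split for (\ref{eq:UEpower2}). The only cosmetic difference is that you aggregate over all antennas with the Frobenius norm and divide by $N_{\rm UE}$ at the end, whereas the paper computes the power at the $i$-th antenna directly; your explicit note that the nonzero-mean LoS gain still kills the cross terms against the zero-mean NLoS gains is a detail the paper leaves implicit under ``independence of the channel paths.''
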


\begin{proof}
See Appendix \ref{app:lemflattop}.
\end{proof}

\begin{rmk}\label{rmk:UEpower}
The analytical expression in (\ref{eq:UEpower}) indicates that, given an idealized power pattern, the received signal power of the user depends on the power gain reflected by the RIS, i.e., $f_{\rm M}$. Then, we give the power scaling law in the following proposition.
\end{rmk}

\begin{pro}\label{pro:fm}
The received power of the user and the power of the flat-top beam both increase linearly with the number of RIS reflecting elements $M$, and decrease linearly with the beamwidth of the power pattern reflected by the RIS.
\end{pro}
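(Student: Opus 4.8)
The plan is to derive a closed-form scaling law for $f_{\rm M}$ from the conservation of the total power reflected by the RIS, and then to transfer that scaling to the received power through \emph{Theorem \ref{thm:flattop}}. The starting point is the single-path decomposition (\ref{eq:avepattern2}), from which the total reflected power is obtained by integrating $y(\phi)$ over all reflection directions. The essential observation is that this integral must be taken in the direction-cosine variable $u\triangleq\cos\phi$, for which the steering vectors defined in (\ref{eq:aH}) form an orthogonal family, i.e. $\int_{-1}^{1}{\bf a}_{\bf H}(u){\bf a}_{\bf H}^H(u)\,du=\frac{2}{M}{\bf I}_M$, because the off-diagonal Gram entries reduce to $\frac{1}{M}\int_{-1}^{1}e^{\jmath\pi(m-n)u}\,du=0$ for integers $m\neq n$. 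Substituting this together with the constant-modulus identity ${\bf \Theta}^H{\bf \Theta}={\bf I}_M$ from (\ref{eq:opcon3}) and the unit-norm columns of ${\bf A}_{\bf G}$ into (\ref{eq:avepattern2}), the total power collapses to $\int_{-1}^{1}y(u)\,du=2MN_{\rm BS}\sum_{l=0}^{L-1}\chi_l$, with $\chi_l$ the diagonal entry defined just after (\ref{eq:avepattern2}). Crucially, this quantity is \emph{independent of the RIS phase configuration} ${\bf \Theta}$ and depends on $M$ only through the explicit linear prefactor, since $\sum_l\chi_l={\rm tr}\left({\bf W}{\bf W}^H{\bf B}_{\bf G}{\bf\Lambda}{\bf B}_{\bf G}^H\right)$ is bounded by a constant under $\|{\bf W}\|_F^2=1$ and is determined solely by the channel statistics and the precoder.

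Next I would invoke the idealized flat-top assumptions of \emph{Theorem \ref{thm:flattop}} ($f_{\rm S}=0$ and $\mathcal{A}_3=\emptyset$): the synthesized pattern equals $f_{\rm M}$ throughout the coverage region $\mathcal{A}_1$ and vanishes outside it, so the total reflected power also equals $f_{\rm M}$ times the width of $\mathcal{A}_1$ measured in the direction-cosine domain. Equating the two expressions for the total power gives $f_{\rm M}=2MN_{\rm BS}\sum_l\chi_l/|\mathcal{A}_1^{u}|$, where $|\mathcal{A}_1^{u}|$ is the beamwidth expressed in $u$. This immediately shows that $f_{\rm M}$ grows linearly in $M$ and decreases as the reciprocal of the beamwidth (the standard beamwidth--gain tradeoff). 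For a beam centered at $\phi_{\rm c}$ the angular and direction-cosine widths are related through the local slope, $|\mathcal{A}_1^{u}|\approx\sin\phi_{\rm c}\,|\mathcal{A}_1|$, so the same linear-in-$M$ and decreasing-in-beamwidth behaviour carries over when $|\mathcal{A}_1|$ is measured as the angular beamwidth appearing in (\ref{eq:UEpower2}).

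Finally, the scaling is transferred to the received power by substituting $f_{\rm M}$ into (\ref{eq:UEpower2}). Because the bracketed factor $\frac{K}{K+1}+\frac{|\mathcal{A}_1|}{\pi(K+1)}$ is independent of $M$, the signal component of ${\mathbb E}\{|r|^2\}$ is proportional to $f_{\rm M}$ and hence increases linearly with $M$; likewise, since $f_{\rm M}$ decreases with the beamwidth and the residual explicit beamwidth factor only attenuates this dependence (reducing it toward a constant as $K\to 0$), the received signal power decreases with the beamwidth, which establishes the claim.

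I expect the conservation step to be the main obstacle. The power pattern in (\ref{eq:avegdis}) is sampled \emph{uniformly in the angle} $\phi$, and for that grid the steering vectors are \emph{not} orthogonal: the off-diagonal entries of $\sum_j{\bf a}_{\bf H}(\bar\phi_j){\bf a}_{\bf H}^H(\bar\phi_j)$ tend to nonvanishing Bessel values $\kappa J_0((m-n)\pi)$, so a naive sum over the uniform angular grid fails to separate $f_{\rm M}$ from ${\bf \Theta}$. The resolution is to carry out the conservation argument in the $u=\cos\phi$ domain (equivalently, to weight the angular integral by $\sin\phi$), where orthogonality is exact, and then to reconcile the resulting direction-cosine beamwidth $|\mathcal{A}_1^{u}|$ with the angular beamwidth $|\mathcal{A}_1|$ of \emph{Theorem \ref{thm:flattop}}; this reconciliation is clean in the narrow-beam regime and otherwise contributes only the benign slope factor $\sin\phi_{\rm c}$, which does not affect the linear-in-$M$ and reciprocal-in-beamwidth conclusions.
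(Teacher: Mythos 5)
Your proposal is correct, but it reaches the conclusion by a genuinely different route than the paper. The paper's own proof (Appendix F) is a short, physical argument from classical antenna theory: $f_{\rm M}$ is the product of the power intercepted by the RIS aperture, which grows linearly in $M$, and the array gain, which by Kraus's empirical formula ${\tilde f}=\mu D = 41253/(\theta^{\circ}_{\rm HP}\phi^{\circ}_{\rm HP})$ is inversely proportional to the azimuth beamwidth once the elevation beamwidth of a linear array is fixed; the received-power half of the claim is then inherited from \emph{Theorem \ref{thm:flattop}} since ${\mathbb E}\{|r|^2\}$ is affine in $f_{\rm M}$. You instead prove the scaling from first principles by power conservation: integrating (\ref{eq:avepattern2}) in the direction cosine $u=\cos\phi$, where the ULA steering vectors of (\ref{eq:aH}) with $\rho=\lambda/2$ satisfy the exact orthogonality $\int_{-1}^{1}{\bf a}_{\bf H}(u){\bf a}_{\bf H}^H(u)\,du=\frac{2}{M}{\bf I}_M$, and using ${\bf \Theta}^H{\bf \Theta}={\bf I}_M$ from (\ref{eq:opcon3}) together with $\|{\bf a}_{\bf G}(\phi_l)\|_2=1$, you obtain $\int_{-1}^{1}y(u)\,du=2MN_{\rm BS}\sum_{l}\chi_l$ independently of ${\bf \Theta}$, hence $f_{\rm M}=2MN_{\rm BS}\sum_l\chi_l\big/|\mathcal{A}_1^u|$ under the ideal flat-top assumptions; I verified each of these computations and they are sound, including your observations that $\sum_l\chi_l={\rm tr}\left({\bf W}{\bf W}^H{\bf B}_{\bf G}{\bf \Lambda}{\bf B}_{\bf G}^H\right)$ is $M$-independent under $\|{\bf W}\|_F^2=1$, that the uniform-in-$\phi$ grid of (\ref{eq:avegdis}) is \emph{not} orthogonal (the off-diagonal Gram entries tend to $\kappa J_0((m-n)\pi)$, so the $u$-domain integration is the right fix), and that the Jacobian factor $\sin\phi_{\rm c}$ reconciles $|\mathcal{A}_1^u|$ with the angular beamwidth. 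What each approach buys: yours is rigorous and self-contained, yields an explicit constant rather than an asymptotic slogan, exposes the nontrivial invariance of the total reflected power to the phase configuration, and handles the multipath/joint-precoder setting directly, whereas the paper's argument is shorter and physically transparent but leans on an empirical gain formula imported from \cite{Kraus2008}. In fact, on the received-power half you are more precise than the proposition itself: substituting $f_{\rm M}\propto 1/|\mathcal{A}_1|$ into (\ref{eq:UEpower2}) gives a signal power of the form $c_1/|\mathcal{A}_1|+c_2$, so the strict inverse-beamwidth law holds for $f_{\rm M}$ and for the LoS-dominant part of the received power, degenerating to a constant as $K\to 0$ --- a nuance you correctly flag and the paper glosses over by transferring the $f_{\rm M}$ scaling wholesale via \emph{Remark \ref{rmk:UEpower}}.
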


\begin{proof}
See Appendix \ref{app:fm}.
\end{proof}

\subsection{Communications of OFDMA Channels}

We consider applying the quasi-static broad coverage to the scenario of OFDMA channels, where each user is assigned to a respective set of subcarriers. For multiuser multi-carrier systems, it is intractable to design a common set of RIS phase shifts for serving multiple users at each respective subcarrier. Moreover, the overhead of channel estimation and feedback is large when the phase shifts of the RIS are optimized under the instantaneous CSI.
Fortunately, the proposed broad coverage can provide quasi-static services for multiple users simultaneously, which only needs a reduced overhead of the statistical CSI.
Specifically, by using the LoS BS-RIS channel, we can decouple the optimizations of the precoder at the BS and the phase shifts of the RIS \cite{He2021}, which facilitates an independent design for each user at the respective subcarrier.
Accordingly, we consider an RIS-assisted massive MIMO-OFDMA system with multiple single-antenna users.
The downlink rate of the considered system is given by
\begin{align}\label{eq:rateMISO}
R'=&\sum_{u=0}^{U-1}\sum_{k\in\{{\mathcal K}_u\}}\log_2\left[1+\frac{p}{\sigma_z^2}\left|\left(\sqrt{\beta_1 \beta_2}{\bf h}_u^H[k]{\bf \Theta}{\bf G}_0[k]+\sqrt{\beta}{\bf h}_{{\rm d}u}^H[k]\right){\bf w}[k]\right|^2\right],
\end{align}
\noindent where $u$ denotes the index of user, $U$ is the number of users, ${\mathcal K}_u$ denotes the set of subcarriers allocated to the $u$-th user, ${\bf h}_u^H[k]$ is the channel between the RIS and the $u$-th user, ${\bf G}_0[k]$ is the LoS BS-RIS channel, and ${\bf h}_{{\rm d}u}^H[k]$ is the direct channel between the BS and the $u$-th user. Accordingly, we have
\begin{align}
{\bf h}_u^H[k]&\triangleq\sqrt{M}\sum_{q_u=0}^{Q-1}\delta_{q_u}[k]{\bf a}_{\bf H}^H(\phi_{q_u}),\label{eq:hr}\\
{\bf G}_0[k]&\triangleq\sqrt{N_{\rm BS}M}\eta{\bf a}_{\bf G}(\phi_{0}){\bf b}_{\bf G}^H(\psi_{0}),\label{eq:GL}\\
{\bf h}_{{\rm d}u}^H[k]&\triangleq\sqrt{N_{\rm BS}}\sum_{q'_u=0}^{Q'-1}\delta_{q'_u}[k]{\bf b}_{\bf G}^H(\psi_{q'_u}),\label{eq:hd}
\end{align}
\noindent where $\eta$ is the complex phase of the LoS channel, 
$\delta_{q_u}[k]$ and $\delta_{q'_u}[k]$ are the frequency-domain complex gains of the $q_u$-th path of ${\bf h}_u^H[k]$ and the $q'_u$-th path of ${\bf h}_{{\rm d}u}^H[k]$, respectively, 
and $\phi_{q_u}$ and $\psi_{q'_u}$ are the AoDs of the $q_u$-th path of ${\bf h}_u^H[k]$ and the $q'_u$-th path of ${\bf h}_{{\rm d}u}^H[k]$, respectively.
In detail, we define the frequency-domain complex gains, i.e., $\delta_{q_u}[k]$ and $\delta_{q'_u}[k]$, as follows 
\begin{align}
\delta_{q_u}[k]&\triangleq\sum_{n=0}^{N_{\rm c}-1}\alpha_{q_u}g(nT-\tau_{q_u})e^{-\jmath \frac{2\pi kn}{N_{\rm c}}}=\alpha_{q_u}e^{-\jmath \frac{2\pi kn_{q_u}}{N_{\rm c}}},\label{eq:hrdelta}\\
\delta_{q'_u}[k]&\triangleq\sum_{n=0}^{N_{\rm c}-1}\alpha_{q'_u}g(nT-\tau_{q'_u})e^{-\jmath \frac{2\pi kn}{N_{\rm c}}}=\alpha_{q'_u}e^{-\jmath \frac{2\pi kn_{q'_u}}{N_{\rm c}}},\label{eq:hddelta}
\end{align}
\noindent where $\alpha_{q_u}$ and $\alpha_{q'_u}$ are the time-domain complex gains of the ${q_u}$-th path of the RIS-UE channel and the ${q'_u}$-th path of the BS-UE direct channel, respectively, $\tau_{q_u}$ and $\tau_{q'_u}$ are the transmission delays of the ${q_u}$-th path of the RIS-UE channel and the ${q'_u}$-th path of the BS-UE direct channel, respectively, 
and the indices $n_{q_u}$ and $n_{q'_u}$ satisfy $n_{q_u}T<\tau_{q_u}\leq(n_{q_u}+1)T$ and $n_{q'_u}T<\tau_{q'_u}\leq(n_{q'_u}+1)T$, respectively. 
We assume that the time-domain channel gains, i.e., $\alpha_{q_u}$ and $\alpha_{q'_u}$, are all uncorrelated and Gaussian distributed with normalized channel powers, i.e., $\sum_{{q_u}=0}^{Q-1}{\mathbb E}\{|\alpha_{q_u}|^2\}=1$ and $\sum_{{q'_u}=0}^{Q'-1}{\mathbb E}\{|\alpha_{q'_u}|^2\}=1$.

For any given ${\bf \Theta}$ in (\ref{eq:rateMISO}), the optimal precoder that maximizes $R'$ corresponds to the maximum ratio transmission (MRT) \cite{wu2019intelligent}, i.e.,
\begin{align}\label{eq:MRT}
{\bf w}[k]=\frac{\sqrt{\beta_1\beta_2}{\bf G}_0^H[k]{\bf \Theta}^H{\bf h}_u[k]+\sqrt{\beta}{\bf h}_{{\rm d}u}[k]}{\left\|\sqrt{\beta_1\beta_2}{\bf G}_0^H[k]{\bf \Theta}^H{\bf h}_u[k]+\sqrt{\beta}{\bf h}_{{\rm d}u}[k]\right\|_2}.
\end{align}
\noindent Then, we derive the rate in (\ref{eq:rateMISO}) as follows
\begin{align}\label{eq:rateMISO2}
R'=&\sum_{u=0}^{U-1}\sum_{k\in\{{\mathcal K}_u\}}\log_2\left(1+\frac{p}{\sigma_z^2}\left\|\sqrt{\beta_1 \beta_2}{\bf h}_u^H[k]{\bf \Theta}{\bf G}_0[k]+\sqrt{\beta}{\bf h}_{{\rm d}u}^H[k]\right\|_2^2\right).
\end{align}
From (\ref{eq:rateMISO2}), we obtain the analytical rate of the considered system in the following theorem.
\begin{thm}\label{thm:rateOFDMA}
In an RIS-assisted multiuser massive MIMO-OFDMA system, if the perfect power pattern of the RIS is synthesized in accordance with (\ref{eq:targetf}) and based on the LoS BS-RIS channel \cite{He2021}, the downlink rate can be approximated in a closed form as follows
\begin{align}\label{eq:rateOFDMA}
{\mathbb E}\{R'\}\approx
N_{\rm c}\log_2\left\{1+\frac{p \beta_1 \beta_2  f_{\rm M}}{\sigma_z^2}\left[\frac{K}{K+1}+\frac{|\mathcal{A}_1|}{\pi(K+1)}\right]+\frac{p\beta N_{\rm BS}}{\sigma_z^2}\right\},
\end{align}
where $f_{\rm M}$ is the power of the flat-top beam defined in (\ref{eq:targetf}), $K$ is the K-factor of the RIS-UE channel, and $|\mathcal{A}_1|$ denotes the beamwidth of the power pattern.
\end{thm}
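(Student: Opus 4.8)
The plan is to start from the per-subcarrier rate already obtained in (\ref{eq:rateMISO2}) under the MRT precoder, push the expectation through the logarithm, and reduce the whole problem to evaluating a single expected effective channel gain. Concretely, I would write $R'=\sum_{u=0}^{U-1}\sum_{k\in\mathcal K_u}\log_2(1+\frac{p}{\sigma_z^2}\Omega_u[k])$ with $\Omega_u[k]\triangleq\|\sqrt{\beta_1\beta_2}{\bf h}_u^H[k]{\bf \Theta}{\bf G}_0[k]+\sqrt{\beta}{\bf h}_{{\rm d}u}^H[k]\|_2^2$, and invoke the mean-value (channel-hardening) approximation $\mathbb{E}\{\log_2(1+\frac{p}{\sigma_z^2}\Omega_u[k])\}\approx\log_2(1+\frac{p}{\sigma_z^2}\mathbb{E}\{\Omega_u[k]\})$. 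The next observation is that, because the pattern is synthesized from statistical CSI and the LoS BS-RIS channel ${\bf G}_0[k]$ is frequency independent while all path gains satisfy $|\delta[k]|=|\alpha|$, the quantity $\mathbb{E}\{\Omega_u[k]\}$ is identical for every subcarrier $k$ and every covered user $u$. Hence the double sum contains exactly $\sum_u|\mathcal K_u|=N_{\rm c}$ identical terms, collapsing $\mathbb{E}\{R'\}$ to $N_{\rm c}\log_2(1+\frac{p}{\sigma_z^2}\mathbb{E}\{\Omega\})$.

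It then remains to evaluate $\mathbb{E}\{\Omega\}$. I would expand the squared norm into the RIS term $\beta_1\beta_2\|{\bf h}_u^H{\bf \Theta}{\bf G}_0\|_2^2$, the direct term $\beta\|{\bf h}_{{\rm d}u}^H\|_2^2$, and the cross term $2\sqrt{\beta_1\beta_2\beta}\,\Re[{\bf h}_u^H{\bf \Theta}{\bf G}_0{\bf h}_{{\rm d}u}]$. Because the RIS-UE channel and the BS-UE direct channel undergo independent fading and the direct-path gains $\alpha_{q'_u}$ are zero-mean Gaussian, $\mathbb{E}\{{\bf h}_{{\rm d}u}\}={\bf 0}$, so the cross term vanishes in expectation. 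For the direct term, substituting (\ref{eq:hd}) and using that the $\alpha_{q'_u}$ are mutually uncorrelated, that $|\delta_{q'_u}[k]|=|\alpha_{q'_u}|$, that $\|{\bf b}_{\bf G}\|_2=1$, and that $\sum_{q'_u}\mathbb{E}\{|\alpha_{q'_u}|^2\}=1$, yields $\beta\,\mathbb{E}\{\|{\bf h}_{{\rm d}u}^H\|_2^2\}=\beta N_{\rm BS}$.

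For the RIS term I would reuse the flat-top argument of \emph{Theorem \ref{thm:flattop}}. With the single LoS BS-RIS path, $\|{\bf h}_u^H{\bf \Theta}{\bf G}_0\|_2^2$ is exactly the RIS power pattern sampled at the RIS-UE path angles, so summing the per-path contributions—each equal to $f_{\rm M}$ when the path lies in the flat-top region $\mathcal A_1$ and to $f_{\rm S}=0$ otherwise—gives $\beta_1\beta_2 f_{\rm M}\sum_{\phi_{q_u}\in\mathcal A_1}\mathbb{E}\{|\alpha_{q_u}|^2\}$, matching (\ref{eq:UEpower}). Applying the K-factor / uniform-NLoS model as in (\ref{eq:UEpower2})—the LoS path of weight $K/(K+1)$ always lies in $\mathcal A_1$, while the NLoS power $1/(K+1)$ contributes its fraction $|\mathcal A_1|/\pi$—produces $\beta_1\beta_2 f_{\rm M}[\frac{K}{K+1}+\frac{|\mathcal A_1|}{\pi(K+1)}]$. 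Combining the three terms gives $\mathbb{E}\{\Omega\}=\beta_1\beta_2 f_{\rm M}[\frac{K}{K+1}+\frac{|\mathcal A_1|}{\pi(K+1)}]+\beta N_{\rm BS}$, and substituting into the collapsed sum yields (\ref{eq:rateOFDMA}).

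The main obstacle is justifying the step $\mathbb{E}\{\log_2(1+\cdot)\}\approx\log_2(1+\mathbb{E}\{\cdot\})$, which is precisely where the ``approximately equal'' in the statement originates. I would argue it via channel hardening: the dominant RIS contribution scales with the large number of reflecting elements $M$ (and with $N_{\rm BS}$), so $\Omega_u[k]$ concentrates about its mean and the fluctuation-induced Jensen gap becomes negligible in the massive-MIMO regime. A secondary point requiring care is that the per-term equality of $\mathbb{E}\{\Omega_u[k]\}$ across all $k$ and $u$—and hence the clean prefactor $N_{\rm c}$—relies on the quasi-static, statistics-based pattern being identical for every covered user and frequency flat, which follows from the frequency-independence established just after (\ref{eq:avepattern}).
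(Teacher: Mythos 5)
Your proposal is correct and follows essentially the same route as the paper's proof in Appendix~\ref{app:thmOFDMA}: push the expectation inside the logarithm via the approximation of \cite[Lemma 1]{QIZHANG2014} (your channel-hardening justification matches the paper's remark that the approximation tightens with the number of BS antennas), evaluate the direct term as $\beta N_{\rm BS}$, evaluate the RIS term by sampling the perfect flat-top pattern at the RIS-UE path angles exactly as in \emph{Theorem~\ref{thm:flattop}}, and collapse the double sum to $N_{\rm c}$ identical terms. Your only departure is cosmetic but welcome: you explicitly verify that the cross term $2\sqrt{\beta_1\beta_2\beta}\,\Re[{\bf h}_u^H{\bf \Theta}{\bf G}_0{\bf h}_{{\rm d}u}]$ vanishes in expectation (using independence and the zero-mean Gaussian gains), a step the paper handles only implicitly by treating the two squared-norm terms ``separately.''
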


\begin{proof}
See Appendix \ref{app:thmOFDMA}.
\end{proof}

\begin{rmk}\label{rmk:fm}
When the power gain $f_{\rm M}$ is large, the analytical rate derived in (\ref{eq:rateOFDMA}) increases logarithmically with $f_{\rm M}$. 
Specifically, we find from \emph{Proposition \ref{pro:fm}} that, when the number of RIS reflecting elements $M\to\infty$, the analytical rate of the considered OFDMA system increases logarithmically with $M$ and decreases logarithmically with the beamwidth of the flat-top pattern reflected by the RIS.
\end{rmk}

\begin{rmk}\label{rmk:twokindsofw}
The proposed RIS-assisted quasi-static broad coverage is well compatible with the current massive MIMO-OFDM system.
Firstly, since the designs of the precoder at the BS and the phase shifts of the RIS are decoupled, the communications of the OFDMA channel and the broadcast channel can be multiplexed in the frequency domain by applying different precoders to different subcarriers while employing the same set of phase shifts at the RIS.
Furthermore, we can perform the long-term update of the RIS phases with an off-line codebook such that we can improve the system performance with an unchanged frame structure, where no more time slot of training or feedback is needed for the instantaneous RIS design.
\end{rmk}

\section{Numerical Results}\label{sec:numerical}

In this section, we present the numerical simulation of the RIS-assisted broad coverage in the communications of the broadcast channel and the OFDMA channel, respectively. As shown in Fig. \ref{fig:overview}, we assume that a BS with $N_{\rm BS}=64$ antennas is at the origin point, an RIS is at $(190,10)\ \text{m}$, and multiple users are located around the point $(200,0)$ m within a broad coverage over $[\phi_{\rm min},\phi_{\rm max}]$, $\phi_{\rm min},\phi_{\rm max}\in(0,\pi)$. 
For a beam covering $[\phi_{\rm min},\phi_{\rm max}]$, the beamwidth is $2\phi_{0.5}=\phi_{\rm max}-\phi_{\rm min}$, and the steering angle is $\phi_{\rm c}=(\phi_{\rm max}+\phi_{\rm min})/2$. 
The power pattern is oversampled by a factor of $\kappa=10$.
We define the path loss as ${\rm PL}\triangleq{\rm PL}_0+10\zeta\log_{10}(d)$, where ${\rm PL}_0=30$ dB is the path loss at the reference distance of $1$ m, $\zeta$ is the path loss exponent (PLE), and $d$ is the distance. Then, the large-scale fading factor $\beta=10^{-0.1{\rm PL}}$. The PLEs of the BS-RIS channel, the RIS-UE channel, and the BS-UE direct channel are 2, 2.2, and 3.5, respectively.
The transmit power of the BS is $p=20$ dBm, and the noise power of the user is ${\sigma_z^2}=-80$ dBm.
The number of subcarriers is $N_{\rm c}=64$ and the length of the OFDM cyclic prefix (CP) is $L_{\rm CP}=8$. Then, the overhead of CP is considered for the rate $R$ by calculating $\frac{N_{\rm c}}{N_{\rm c}+L_{\rm CP}}R$.

\subsection{Broadcast Channel}

\begin{figure}[!t]
\centering
\includegraphics[width=13pc]{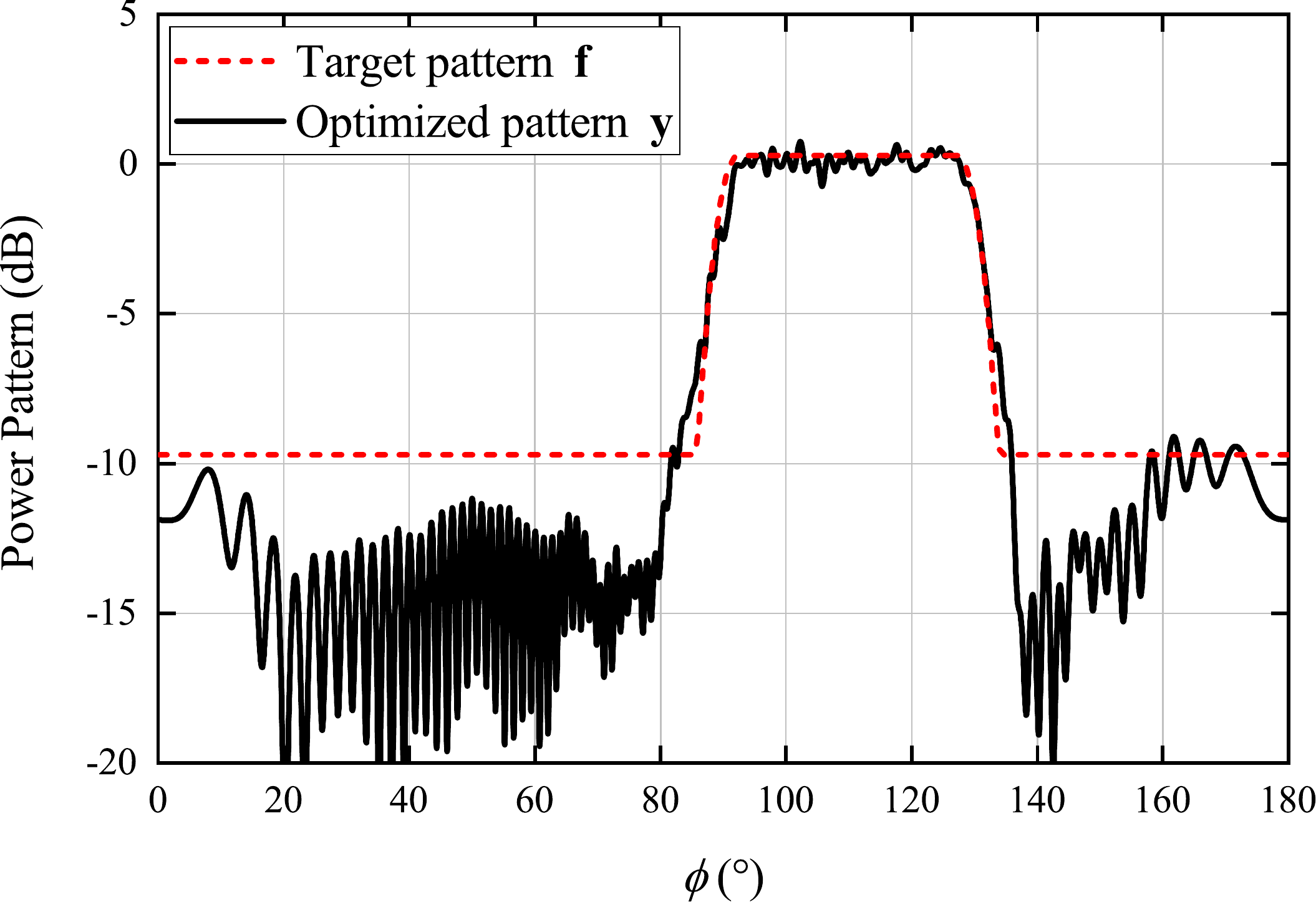}
\caption{Power pattern reflected by the RIS.}
\label{fig:pattern}
\end{figure}

Firstly, we evaluate the performance of Algorithm \ref{alg:altopt}.
In Fig. \ref{fig:pattern}, we synthesize the reflected power pattern of an RIS with $M=100$ elements for transmitting $N_{\rm d}=4$ streams under the BS-RIS mmWave channel with $L=5$ channel paths.
In the considered multipath channel, we assume that the LoS BS-RIS channel path is as strong as the NLoS channel paths, whose profile is assumed to be uniform.
It is shown in Fig. \ref{fig:pattern} that the optimized power pattern provides a broad coverage from  $90^{\circ}$ to $140^{\circ}$ with a power fluctuation of 1.5 dB.

\begin{figure}[!t]
	\centering
	\includegraphics[width=13pc]{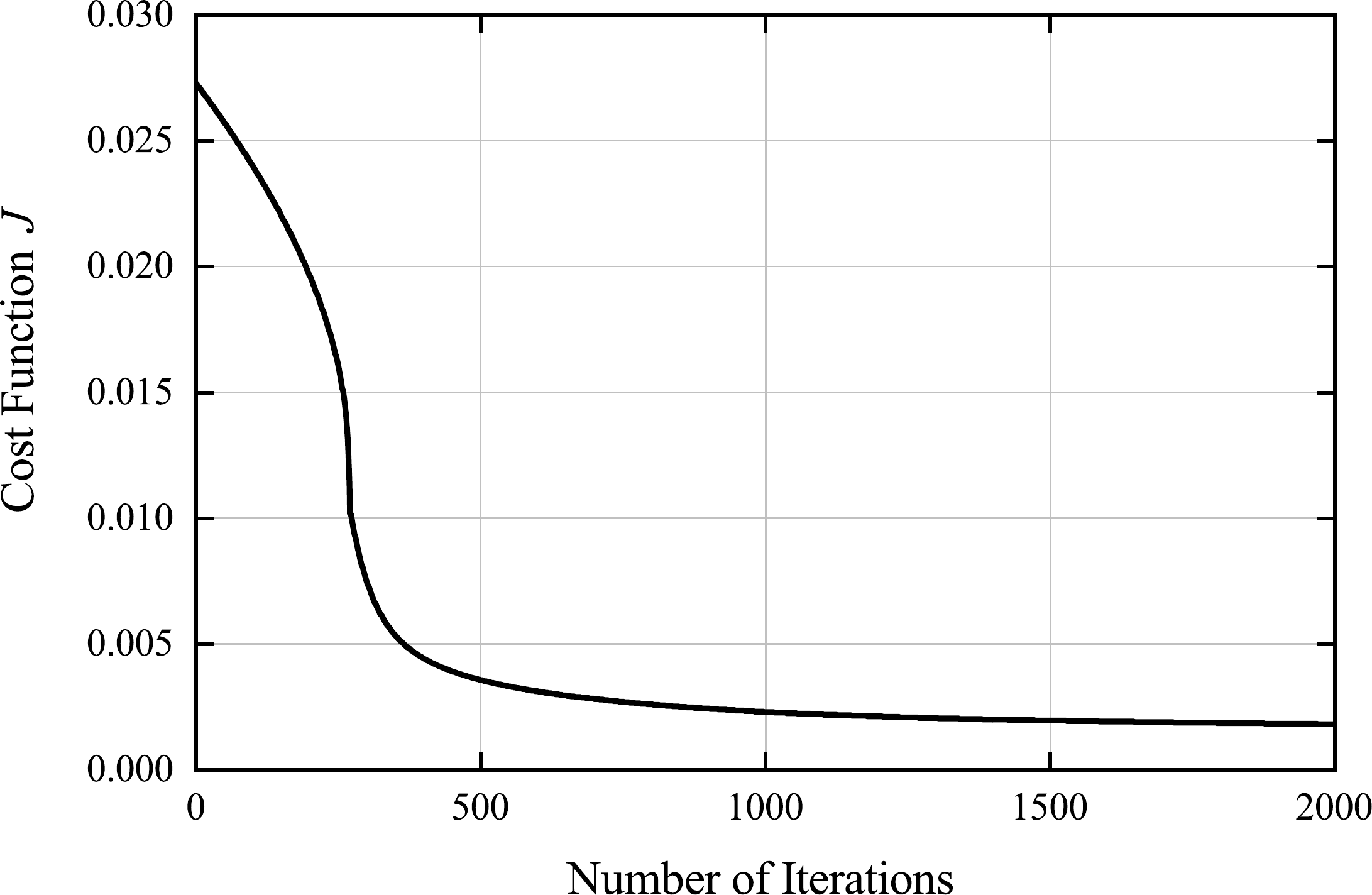}
	\caption{Convergence behavior of the cost function $J$.}
	\label{fig:convergence}
\end{figure}

In Fig. \ref{fig:convergence}, we present the convergence behavior of Algorithm \ref{alg:altopt}, where the cost function $J$ is plotted versus the number of iterations. Fig. \ref{fig:convergence} depicts that the proposed algorithm converges as proved in \cite[Theorem 4.3.1]{absil2009}.

\begin{figure}[!t]
\centering
\includegraphics[width=13pc]{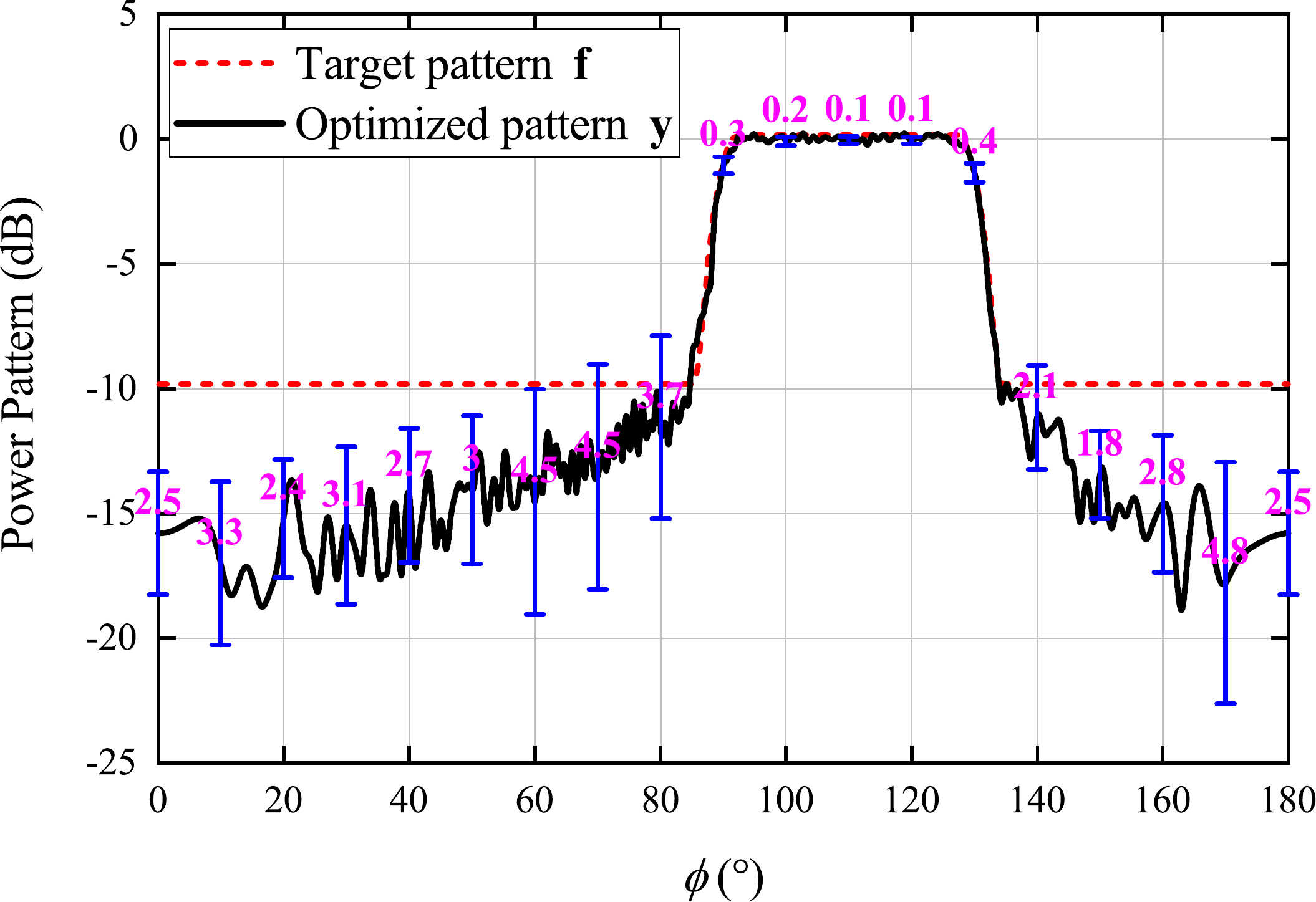}
\caption{Average results of the power patterns under 50 different multipath BS-RIS channels.}
\label{fig:avepattern}
\end{figure}

To present the validity of the proposed algorithm, we further synthesize the reflected power pattern of the RIS over 50 randomly generated multipath BS-RIS channels. In Fig. \ref{fig:avepattern}, we average these 50 power patterns and mark the standard deviation of the power gain at different angles.
It is found that the standard deviation is small within the flat-top region, which indicates that the proposed algorithm is effective over different mmWave multipath channels.

Moreover, we evaluate the spectral efficiency of the considered RIS-assisted system in Fig. \ref{fig:CDFBC} by simulating the cumulative distribution function (CDF) of the downlink rate derived in (\ref{eq:rate0}). Within the broad coverage designed in Fig. \ref{fig:pattern}, we randomly generate 1280 users to display the CDF of the downlink rate, where each user is assigned to one subcarrier of the OFDM symbol. For the RIS-UE channel, we assume the uniform profile with four mmWave paths for the NLoS channel, and set the K-factor to $10$ dB. 
For the BS-UE direct channel, we assume that the LoS channel path is blocked, and the NLoS mmWave channel paths follow a uniform profile.
The AoAs and AoDs of the mmWave paths are assumed to be uniformly distributed.
Moreover, all the mmWave channels are randomly generated for 500 realizations.
In Fig. \ref{fig:CDFBC}, we also display the downlink rate of the random RIS phases and the downlink rate with no RIS, where the BS directly provides a broad coverage for the users.
It is observed that the broad coverage designed by the proposed algorithm obtains the highest downlink rate, which verifies its effectiveness.

\begin{figure}[!t]
	\centering
	\includegraphics[width=13pc]{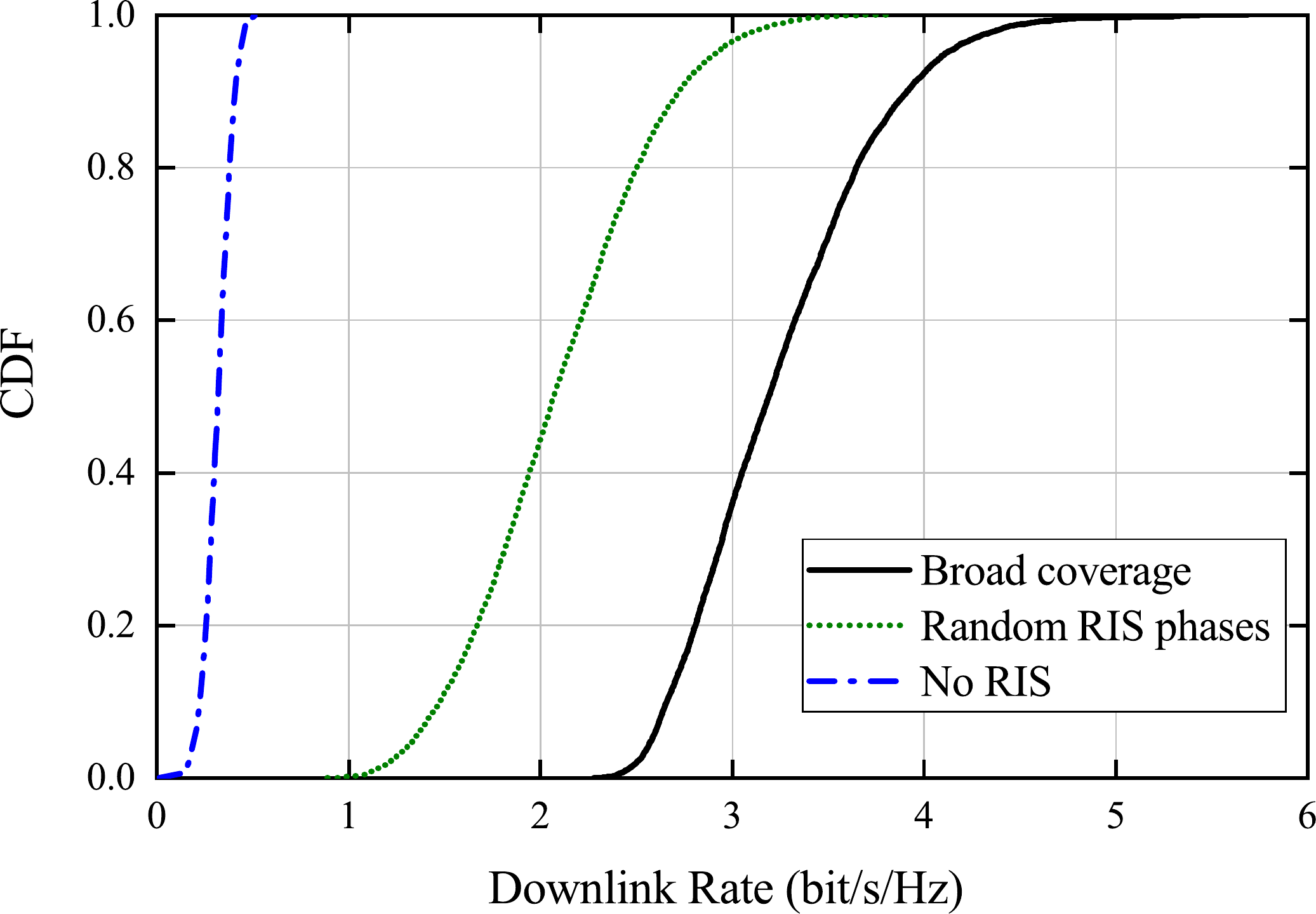}
	\caption{CDF of the downlink rate in the communication of broadcast channel.}
	\label{fig:CDFBC}
\end{figure}

\subsection{OFDMA Channel}

\begin{figure}[!t]
\centering
\includegraphics[width=13pc]{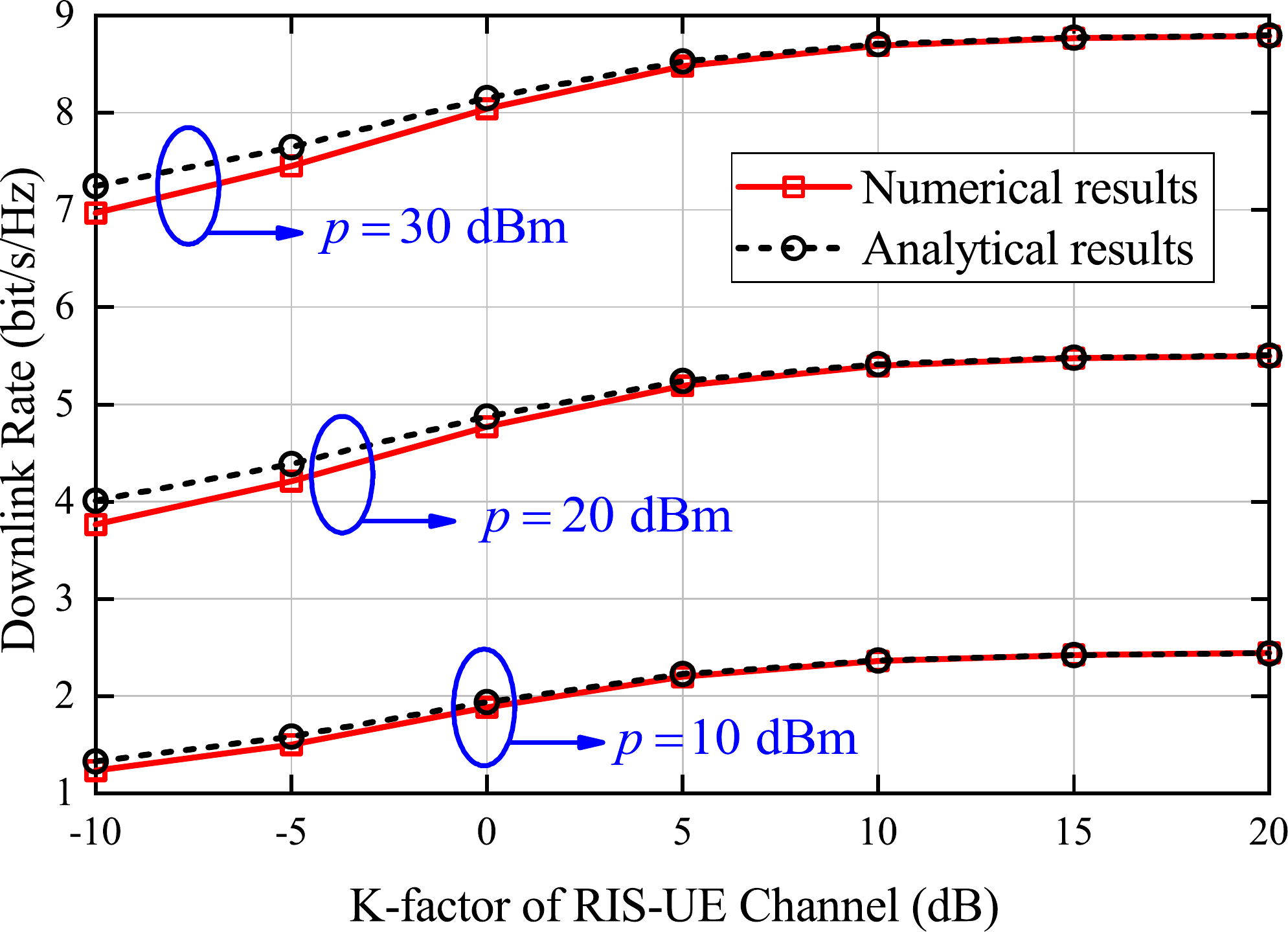}
\caption{Downlink rate under different K-factors of the RIS-UE channel.}
\label{fig:ratevsK}
\end{figure}

In this subsection, we present the performance of the RIS-assisted broad coverage in OFDMA channels.
In the considered OFDMA system, we design a broad coverage from $90^{\circ}$ to $120^{\circ}$ by assuming an RIS with $M=200$ reflecting elements.

Firstly, we present the ergodic downlink rate under different channels and different transmit powers.
In Fig. \ref{fig:ratevsK}, we compare the analytical expression derived in (\ref{eq:rateOFDMA}) with the numerical results based on the instantaneous rate in (\ref{eq:rateMISO2}).
We assume a dominant LoS channel between the BS and the RIS.
For the RIS-UE channel, we assume the uniform profile with three NLoS mmWave channel paths, and change the K-factor from $-10$ dB to $20$ dB.
For the BS-UE direct channel, we assume that the LoS channel path is blocked, and the NLoS mmWave channel paths follow a uniform profile.
The AoAs and the AoDs of mmWave channel paths are assumed to be uniformly distributed.
We average the downlink rate of the considered OFDMA system over 1000 randomly generated channels, where the transmit power of the BS changes from $10$ dBm to $30$ dBm.
It is found in Fig. \ref{fig:ratevsK} that the analytical downlink rate derived in (\ref{eq:rateOFDMA}) matches the numerical results well.
At the small K-factors, the approximation errors between the analytical results and the numerical results are slightly larger due to the larger differences among the channel powers at the BS antennas, which leads to a lager approximation error in (\ref{eq:rateMISOderive}) \cite{QIZHANG2014}.

\begin{figure}[!t]
\centering
\includegraphics[width=13pc]{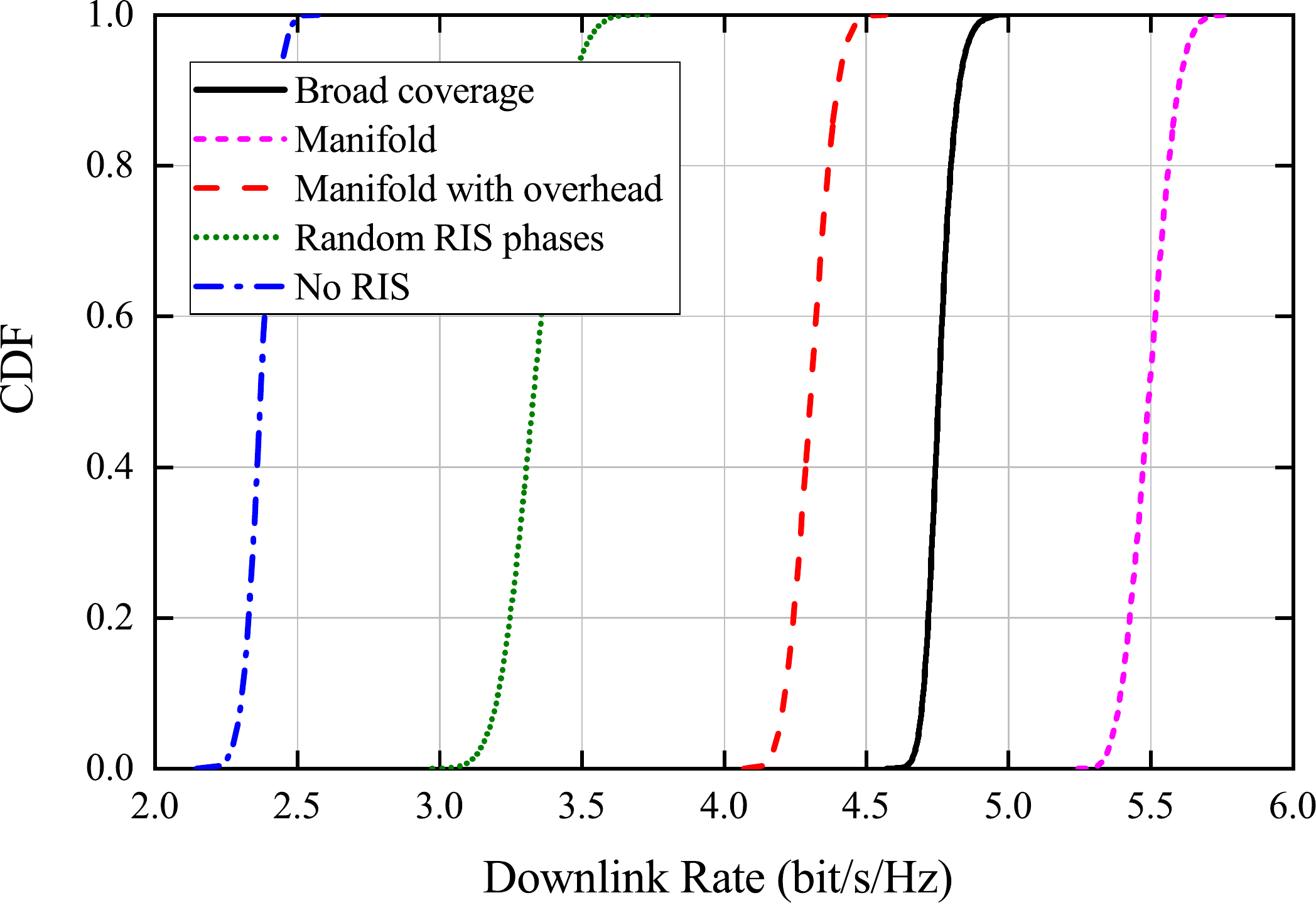}
\caption{CDF of the downlink rate for the OFDMA channel.}
\label{fig:CDFOFDMA}
\end{figure}

In Fig. \ref{fig:CDFOFDMA}, we display the CDF of the downlink rate of the considered OFDMA system. For the BS-RIS channel and the RIS-UE channel, we assume uniform profiles with four mmWave paths for the NLoS channels and set the K-factors to $10$ dB.
The AoAs and AoDs of the mmWave channel paths are assumed to be uniformly distributed.
We compare the performance of the proposed algorithm with the manifold method \cite{Yu2019}, which maximizes the downlink rate by using the instantaneous CSI.
From Fig. \ref{fig:CDFOFDMA}, we find that, compared with the scenarios of the random RIS phases and no RIS, both the proposed broad coverage and the manifold method improve the downlink system rate.
It is apparent that the manifold method with the perfect instantaneous CSI outperforms the broad coverage with the statistical CSI. However, the manifold method loses its advantage if we consider the estimation error of the instantaneous CSI and the overhead of channel estimation.
In Fig. \ref{fig:CDFOFDMA}, we assume an estimation error with the normalized mean squared error (NMSE) of 0.2 for the manifold method, and assume that $20\%$ of the coherence time is used for the channel estimation in a relative short coherence block length.
For instance, in a coherence block length of ten OFDM symbols, two symbols are allocated for the channel estimation, whose overhead is calculated by adopting the grouping method introduced in \cite{OFDM9039554} with a grouping ratio of $1/200$.
Considering the estimation error and the overhead of channel estimation, it is found in Fig. \ref{fig:CDFOFDMA} that the downlink rate of the manifold method is lower than that of the proposed quasi-static  broad coverage.

\section{Conclusion}\label{sec:conclusion}

In this paper, we design the quasi-static broad coverage at the RIS to provide reliable mmWave communications with reduced overhead instead of optimizing the phase shifts of the RIS based on the instantaneous CSI.
We propose a general design framework to synthesize the flat-top pattern with small power fluctuations.
For the communication of broadcast channels, we generalize the broad coverage of the single transmit stream to the scenario of multiple streams.
For the communication of OFDMA channels, we find that, by taking into account the overhead of channel estimation, the proposed quasi-static broad coverage can achieve a higher rate than the optimal RIS design that adopts the instantaneous CSI.
Moreover, we derive the analytical downlink rate of the considered OFDMA system, which scales logarithmically with the power gain reflected by the RIS.

\appendices
\section{Proof of \emph{Proposition \ref{pro:EucgradW}}}\label{app:EucgradW}

By defining $c\triangleq M^2 N_{\rm BS}/\|{\bf W}\|_F^2$,
${\bf V}\triangleq{\bf A}_{\bf G}{\bf I}\circ\left({\bf\Lambda}{\bf B}_{\bf G}^H{\bf W}
{\bf W}^H{\bf B}_{\bf G}\right){\bf A}_{\bf G}^H$, 
${\bf Y}\triangleq{\bf \Gamma}{\bf\widetilde A}{\bf \Theta}{\bf V}{\bf \Theta}^H{\bf\widetilde A}^H$, 
and ${\bf F}\triangleq{\rm diag}\left({\bf \Gamma}{\bf f}\right)$, we derive the cost function of problem P2 as follows
\begin{align}\label{Jderive}
J&=\left\|{\bf \Gamma}\left({\bf f}-{\bf \bar y}\right)\right\|_2^2
=\left\|{\bf \Gamma}{\bf f}\right\|_2^2
+c^2{\rm tr}\left({\bf Y}\circ{\bf Y}\right)
-2c{\rm tr}\left({\bf F}{\bf Y}^H\right).
\end{align}

Firstly, we obtain the derivatives of $1/\|{\bf W}\|_F^4$ and $1/\|{\bf W}\|_F^2$ as follows
\begin{align}\label{eq:term1}
\frac{d\left(\frac{1}{\|{\bf W}\|_F^4}\right)}{d{\bf W}^*}
=\frac{-2{\bf W}}{\left[{\rm tr}\left({\bf W}{\bf W}^H\right)\right]^3},\ \ \ \ \ 
\frac{d\left(\frac{1}{\|{\bf W}\|_F^2}\right)}{d{\bf W}^*}
=\frac{-{\bf W}}{\left[{\rm tr}\left({\bf W}{\bf W}^H\right)\right]^2}.
\end{align}

Then, we have the derivative of ${\rm tr}\left({\bf Y}\circ{\bf Y}\right)$ as follows
\begin{align}
\frac{\partial\left({\rm tr}\left({\bf Y}\circ{\bf Y}\right)\right)}{\partial{\bf W}^*}
&\mathop{=}^{(a)}\frac{2{\rm tr}\left(\left({\bf I}\circ{\bf Y}^T\right){\bf \Gamma}{\bf\widetilde A}{\bf \Theta}{\bf A}_{\bf G}{\bf I}\circ\left({\bf\Lambda}{\bf B}_{\bf G}^H{\bf W}
	d\left({\bf W}^H\right){\bf B}_{\bf G}\right){\bf A}_{\bf G}^H{\bf \Theta}^H{\bf\widetilde A}^H\right)}{d {\bf W}^*}\nonumber\\
&\mathop{=}^{(b)}\frac{2{\rm tr}\left({\bf I}\circ\left[{\bf A}_{\bf G}^H{\bf \Theta}^H{\bf\widetilde A}^H\left({\bf I}\circ{\bf Y}^T\right){\bf \Gamma}{\bf\widetilde A}{\bf \Theta}{\bf A}_{\bf G}\right]{\bf\Lambda}{\bf B}_{\bf G}^H{\bf W}
	d\left({\bf W}^H\right){\bf B}_{\bf G}\right)}{d {\bf W}^*}\nonumber\\
&=2{\bf B}_{\bf G}{\bf I}\circ\left[{\bf A}_{\bf G}^H{\bf \Theta}^H{\bf\widetilde A}^H\left({\bf I}\circ{\bf Y}^T\right){\bf \Gamma}{\bf\widetilde A}{\bf \Theta}{\bf A}_{\bf G}\right]{\bf\Lambda}{\bf B}_{\bf G}^H{\bf W},\label{rq:term3}
\end{align}
\noindent where the equations $(a)$ and $(b)$ hold from the following identity \cite{horn1994topics}:
\begin{align}\label{eq:identity}
{\rm tr}\left({\bf A}({\bf B}\circ{\bf C})\right)
={\rm tr}\left(({\bf A}\circ{\bf B}^T){\bf C}\right).
\end{align}

Moreover, the derivative of ${\rm tr}\left({\bf F}{\bf Y}^H\right)$ is derived as follows
\begin{align}
\frac{\partial\left({\rm tr}\left({\bf F}{\bf Y}^H\right)\right)}{\partial{\bf W}^*}
&=\frac{{\rm tr}\left({\bf F}{\bf\widetilde A}{\bf \Theta}{\bf A}_{\bf G}{\bf I}\circ\left[{\bf\Lambda}{\bf B}_{\bf G}^H{\bf W}
	d\left({\bf W}^H\right){\bf B}_{\bf G}\right]{\bf A}_{\bf G}^H{\bf \Theta}^H{\bf\widetilde A}^H{\bf \Gamma}\right)}{d{\bf W}^*}\nonumber\\
&={\bf B}_{\bf G}{\bf I}\circ\left({\bf A}_{\bf G}^H{\bf \Theta}^H{\bf\widetilde A}^H{\bf \Gamma}{\bf F}{\bf\widetilde A}{\bf \Theta}{\bf A}_{\bf G}\right){\bf\Lambda}{\bf B}_{\bf G}^H{\bf W}.\label{eq:term4}
\end{align}

Then, from (\ref{Jderive}), we derive the gradient of the cost function $J$ as follows
\begin{align}\label{eq:1}
\frac{\partial J}{\partial{\bf W}^*}
=&M^4 N_{\rm BS}^2\frac{d\left(\frac{1}{\|{\bf W}\|_F^4}\right)}{d{\bf W}^*}
{\rm tr}\left({\bf Y}\circ{\bf Y}\right)
+\left(\frac{M^2 N_{\rm BS}}{\|{\bf W}\|_F^2}\right)^2\frac{d\left({\rm tr}\left({\bf Y}\circ{\bf Y}\right)\right)}{d{\bf W}^*}\nonumber\\
&-2M^2 N_{\rm BS}\frac{d\left(\frac{1}{\|{\bf W}\|_F^2}\right)}{d{\bf W}^*}
{\rm tr}\left({\bf F}{\bf Y}^H\right)
-\frac{2M^2 N_{\rm BS}}{\|{\bf W}\|_F^2}
\frac{d\left({\rm tr}\left({\bf F}{\bf Y}^H\right)\right)}{d{\bf W}^*}.
\end{align}
Substituting (\ref{eq:term1})-(\ref{eq:term4}) into (\ref{eq:1}), it yields (\ref{eq:gradW}).

\section{Proof of \emph{Lemma \ref{lem:gradient}}}\label{app:lemgradient}

Since there exist dependencies among the entries of the diagonal matrix ${\bf \Theta}^*$ \cite{hjorungnes2011}, we define a parameterization function ${\bf F}({\bm \theta}^*)\triangleq {\bf \Theta}^*$, and  regard the cost function $J$ as the following composed function:
\begin{align}\label{eq:Jv}
J({\bm \theta}^*)=J({\bf \widetilde \Theta}^*){\big|}_{{\bf \widetilde \Theta}^*={\bf \Theta}^*={\bf F}({\bm \theta}^*)}
=J({\bf F}({\bm \theta}^*)),
\end{align}
\noindent where ${\bf \widetilde \Theta}^* \in {\mathbb C}^{M \times  M}$ is a matrix with independent entries \cite{hjorungnes2011}.

To unify the format of derivative, we follow the definition in \cite[Definition 3.2]{hjorungnes2011}, and give the derivative of ${\bf F}$ with respect to the complex matrix ${\bf Z}^*$ as follows
\begin{align}\label{definederivative}
{\mathcal D}_{{\bf Z}^*}{\bf F}({\bf Z},{\bf Z}^*)=\frac{\partial {\rm vec}({\bf F}({\bf Z},{\bf Z}^*))}{\partial {\rm vec}^T({\bf Z}^*)}.
\end{align}

Then, we use the chain rule of complex-valued derivatives \cite{hjorungnes2011}, and obtain
\begin{align}\label{eq:chainrule}
{\mathcal D}_{{\bm \theta}^*}J({\bm \theta}^*)=\left({\mathcal D}_{{\bf\widetilde \Theta}^*}J({\bf \widetilde \Theta}^*){\big|}_{{\bf \widetilde \Theta}^*={\bf \Theta}^*={\bf F}({\bm \theta}^*)}\right)
{\mathcal D}_{{\bm \theta}^*}{\bf F}.
\end{align}
As for ${\mathcal D}_{{\bm \theta}^*}{\bf F}$, we find that
\begin{align}\label{eq:Fv}
d{\rm vec}({\bf \Theta}^*)=d{\rm vec}({\bf F}({\bm \theta}^*))={\bf L}_{\rm D}d{\bm \theta}^*,
\end{align}
\noindent where ${\bf L}_{\rm D}$ is an $M^2\times M$ matrix that places the diagonal entries of ${\bf A} \in {\mathbb C}^{M \times  M}$ onto ${\rm vec}({\bf A})$ \cite{hjorungnes2011}, i.e., ${\rm vec}\left({\bf I}\circ{\bf A}\right)={\bf L}_{\rm D}{\rm diag}({\bf A})$.
Therefore, we have ${\mathcal D}_{{\bm \theta}^*}{\bf F}={\bf L}_{\rm D}$.

We rewrite both sides of (\ref{eq:chainrule}) by using the formal derivative given in (\ref{definederivative}), and obtain
\begin{align}\label{eq:chainrule2}
\left[{\rm vec}\left(\frac{\partial J}{\partial {\bm \theta}^*}\right) \right]^T
&=\left[{\rm vec}\left(\frac{\partial J}{\partial {\bf\widetilde \Theta}^*}\right)
{\bigg|}_{{\bf \widetilde \Theta}^*={\bf \Theta}^*} \right]^T {\bf L}_{\rm D}
\mathop{=}^{(a)}\left[{\rm diag}\left(\frac{\partial J}{\partial {\bf\widetilde \Theta}^*}{\bigg|}_{{\bf \widetilde \Theta}^*={\bf \Theta}^*}\right)\right]^T,
\end{align}
\noindent where $(a)$ holds since ${\bf L}_{\rm D}^T{\rm vec}({\bf A})={\rm diag}({\bf A})$ \cite{hjorungnes2011}. From (\ref{eq:chainrule2}), we obtain (\ref{eq:gradient}).

\section{Proof of \emph{Proposition \ref{pro:Eucgradtheta}}}\label{app:Eucgradtheta}

From \emph{Lemma \ref{lem:gradient}}, we obtain the derivatives of the terms in (\ref{Jderive}) with respect to ${\bm \theta}^*$ as follows
\begin{align}
\frac{\partial\left({\rm tr}\left({\bf Y}\circ{\bf Y}\right)\right)}{\partial{\bm \theta}^*}
&\mathop{=}^{(a)}{\rm diag}\left(\frac{2{\rm tr}\left(\left({\bf I}\circ{\bf Y}^T\right){\bf \Gamma}{\bf\widetilde A}{\bf \Theta}{\bf V}d\left({\bf \Theta}^H\right)
{\bf\widetilde A}^H\right)}{d{\bm \Theta}^*}\right)\nonumber\\
&={\rm diag}\left(2{\bf\widetilde A}^H\left({\bf I}\circ{\bf Y}^T\right){\bf \Gamma}{\bf\widetilde A}{\bf \Theta}{\bf V}\right)\label{eq:YY},\\
\frac{\partial\left({\rm tr}\left({\bf F}{\bf Y}^H\right)\right)}{\partial{\bm \theta}^*}
&={\rm diag}\left(\frac{{\rm tr}\left({\bf F}{\bf\widetilde A}{\bf \Theta}{\bf V}d\left({\bf \Theta}^H\right){\bf\widetilde A}^H{\bf \Gamma}\right)}{d{\bf \Theta}^*}\right)
={\rm diag}\left({\bf\widetilde A}^H{\bf \Gamma}{\bf F}{\bf\widetilde A}{\bf \Theta}{\bf V}\right),\label{eq:FY}
\end{align}
\noindent where the identity (\ref{eq:identity}) is used in $(a)$.

From (\ref{eq:YY})-(\ref{eq:FY}) and (\ref{Jderive}), we derive the gradient of $J$ as follows
\begin{align}\label{eq:Jderive2}
\frac{\partial J}{\partial{\bm \theta}^*}
&={\rm diag}\left({\bf\widetilde A}^H\left[2c^2\left({\bf I}\circ{\bf Y}^T\right){\bf \Gamma}-2c{\bf \Gamma}{\bf F}\right]{\bf\widetilde A}{\bf \Theta}{\bf V}\right),
\end{align}
from which we obtain (\ref{eq:Eucgradtheta}).

\section{Proof of \emph{Proposition \ref{pro:shiftAOA}}}\label{app:shiftAOA}

By synthesizing the power pattern based on an LoS BS-RIS channel path with the AoA of $\phi_0$, we obtain the reflected power pattern $y_{\rm L}(\phi)$ towards the AoD of $\phi$ as follows \cite{He2021}
\begin{align}\label{eq:f0}
y_{\rm L}(\phi)=N_{\rm BS}M^2\left|{\bf a}_{\bf H}^H(\phi){\bf \Theta}{\bf a}_{\bf G}(\phi_0)\right|^2.
\end{align}
Then, we consider another incident BS-RIS channel path with the AoA of $\phi_1\ (\phi_1\neq\phi_0)$, and rewrite the term ${\bf a}_{\bf H}^H(\phi){\bf \Theta}{\bf a}_{\bf G}(\phi_0)$ in (\ref{eq:f0}) as follows
\begin{align}\label{eq:athetaa}
{\bf a}_{\bf H}^H(\phi){\bf \Theta}{\bf a}_{\bf G}(\phi_0)
&=\frac{1}{M}\sum_{m=0}^{M-1}e^{-\jmath \frac{2\pi}{\lambda}m\rho \cos \phi} \theta_{m} e^{-\jmath \frac{2\pi}{\lambda}m\rho \cos \phi_0}\nonumber\\
&=\frac{1}{M}\sum_{m=0}^{M-1}e^{-\jmath \frac{2\pi}{\lambda}m\rho (\cos \phi+\xi)} \theta_{m} e^{-\jmath \frac{2\pi}{\lambda}m\rho \cos \phi_1},
\end{align}
\noindent where $\xi\triangleq\cos\phi_0-\cos\phi_1$. Observing (\ref{eq:athetaa}), we have $y_{\rm L}(\phi)=y_{\rm L}(\phi')$ if there exists $\phi'$ satisfying $\cos\phi'=\cos \phi+\xi$, which means that the flat-top beam at the angle of $\phi$ is shifted to the angle of $\phi'$.
Since cosine function is monotonous within $[0,\pi]$, the region of the shifted flat-top beam can be determined by solving the equations
$\cos\phi'_{\rm min}=\cos\phi_{\rm min}+\xi$ and $\cos\phi'_{\rm max}=\cos\phi_{\rm max}+\xi$ at the bounds of the flat-top beam $\mathcal A_1=[\phi_{\rm min},\phi_{\rm max}]$.
If $\cos\phi+\xi>1$ or $\cos\phi+\xi<-1$, the shifted flat-top beam is partly cut off by the range of  $[0,\pi]$.
In detail, when $\phi_1<\phi_0$, if $\cos\phi_{\rm max}+\xi<-1$, we obtain the cut-off bound $\phi'_{\rm max}=\pi$.
Therefore, $\phi'_{\rm max}=\arccos(\cos\phi_{\rm max}+\xi)$ or $\phi'_{\rm max}=\pi$ .
Moreover, when $\phi_1>\phi_0$, since we assume that $\phi_{\rm min},\phi_{\rm max}\in[\pi/2,\pi]$,  $\cos\phi_{\rm min}+\xi<1$ holds.
From the above discussions, we obtain the region of the shifted flat-top beam in (\ref{eq:A1shift}).

\section{Proof of \emph{Theorem \ref{thm:flattop}}}\label{app:lemflattop}

For a given user within the broad coverage, located at the angle of $\phi_{q}$, the average received power at the $i$-th antenna is
\begin{align}\label{eq:signal2}
{\mathbb E}\{|{\bf r}_i[k]|^2\}
&=p \beta_1 \beta_2 M{\mathbb E}\left\{\left\|\sum_{q=0}^{Q-1}\delta_q[k]e^{-\jmath \frac{2\pi}{\lambda}(i-1)\rho \sin \psi'_q}
{\bf a}_{\bf H}^H(\phi_{q})
{\bf \Theta}{\bf G}[k]{\bf W}[k]{\bf s}[k]\right\|_2^2\right\}+\sigma_z^2\nonumber\\
&\mathop{=}^{(a)}p \beta_1 \beta_2 M\left(\sum_{\phi_q\in\mathcal{A}_1}+\sum_{\phi_q\notin\mathcal{A}_1}\right){\mathbb E}\{|\delta_q[k]|^2\}
{\mathbb E}\left\{\left\|{\bf a}_{\bf H}^H(\phi_{q})
{\bf \Theta}{\bf G}[k]{\bf W}[k]{\bf s}[k]\right\|_2^2\right\}+\sigma_z^2\nonumber\\
&\mathop{=}^{(b)}p \beta_1 \beta_2 f_{\rm M}\sum_{\phi_q\in\mathcal{A}_1}{\mathbb E}\{|\alpha_q|^2\}+\sigma_z^2.
\end{align}
\noindent In ($a$), the independence of the channel path is applied. In ($b$), we substitute the power pattern sampled at the angle $\phi_{q}$, i.e., $y(\phi_{q})$ defined in (\ref{eq:avepattern}). 
Then, by using the assumption of the perfect pattern synthesis, i.e., $y(\phi)=f(\phi)$, the side lobe $f_{\rm S}=0$, and the rectangular roll-off $\mathcal{A}_3=\emptyset$, we observe that 
$y(\phi_{q})=f(\phi_{q})=0$ holds for $\phi_q\notin\mathcal{A}_1$, 
while $y(\phi_{q})=f(\phi_{q})=f_{\rm M}$ holds for $\phi_q\in\mathcal{A}_1$.
Without loss of generality, we assume that the K-factor of the RIS-UE channel is $K$ and the channel power of the NLoS paths is uniformly distributed over $[0,\pi]$.
In this way, only the LoS RIS-UE channel path and $|\mathcal{A}_1|/\pi$ of the NLoS RIS-UE channel paths are within the broad coverage, i.e., $\phi_q\in\mathcal{A}_1$.
Therefore, we replace the term $\sum_{\phi_q\in\mathcal{A}_1}{\mathbb E}\{|\alpha_q|^2\}$ in (\ref{eq:signal2}) by $K/(K+1)+|\mathcal{A}_1|/[\pi(K+1)]$, which results in a constant average power as given in (\ref{eq:UEpower2}).

\section{Proof of \emph{Proposition \ref{pro:fm}}}\label{app:fm}

The power of flat-top beam, i.e., $f_{\rm M}$, is the product of the incident power at the RIS and the array gain of the RIS.
Firstly, the overall power impinging the RIS increases linearly with the number of RIS reflecting elements, i.e., $M$.
On the other hand, as discussed in \cite{Kraus2008}, the gain ${\tilde f}$ of an array is defined as
\begin{align}\label{eq:gain}
{\tilde f}\triangleq \mu D=\frac{41253^{\Box}}{\theta^{\circ}_{\rm HP}\phi^{\circ}_{\rm HP}},
\end{align}
\noindent  where $\mu\ (0\leq\mu\leq 1)$ is the efficiency factor, $D$ is the directivity of the array, $41253\approx 4\pi(180/\pi)^2$ is the number of square degrees ($^{\Box}$) in sphere,  and $\theta^{\circ}_{\rm HP}$ and $\phi^{\circ}_{\rm HP}$ are the half-power beamwidths in two principal planes, i.e., the elevation and the azimuth, respectively. For a linear array, the angle of elevation $\theta^{\circ}_{\rm HP}$ is a constant. Therefore, the array gain of the RIS decreases linearly with the beamwidth of the power pattern defined in (\ref{eq:targetf}).

\section{Proof of \emph{Theorem \ref{thm:rateOFDMA}}}\label{app:thmOFDMA}

From (\ref{eq:rateMISO2}), we approximate the downlink rate ${\mathbb E}\{R'\}$ by using \cite[Lemma 1]{QIZHANG2014} as follows
\begin{align}\label{eq:rateMISOderive}
{\mathbb E}\{R'\}\approx& \sum_{u=0}^{U-1}\sum_{k\in\{{\mathcal K}_u\}}\log_2\left(1+\frac{p }{\sigma_z^2}{\mathbb E}\left\{\left\|\sqrt{\beta_1 \beta_2}{\bf h}_u^H[k]{\bf \Theta}{\bf G}_0[k]+\sqrt{\beta}{\bf h}_{{\rm d}u}^H[k]\right\|_2^2\right\}\right),
\end{align}
\noindent where the approximation becomes tighter as the number of BS antennas increases \cite{QIZHANG2014}.
Then, we derive the expectation of the terms in (\ref{eq:rateMISOderive}) separately.
Firstly, from (\ref{eq:hd}), we obtain
\begin{align}\label{eq:term2}
{\mathbb E}\left\{\left\|{\bf h}_{{\rm d}u}^H[k]\right\|_2^2\right\}=N_{\rm BS}\sum_{q'_u=0}^{Q'-1}{\mathbb E}\left\{|\alpha_{q'_u}|^2\right\}
=N_{\rm BS}.
\end{align}
Then, from (\ref{eq:hr}) and (\ref{eq:GL}), we obtain
\begin{align}
{\mathbb E}\left\{\left\|{\bf h}_u^H[k]{\bf \Theta}{\bf G}_0[k]\right\|_2^2\right\}
&=N_{\rm BS}M^2{\mathbb E}\left\{\left\|\sum_{q_u=0}^{Q-1}\delta_{q_u}[k]{\bf a}_{\bf H}^H(\phi_{q_u}){\bf \Theta}
\eta{\bf a}_{\bf G}(\phi_0){\bf b}_{\bf G}^H(\psi_0)\right\|_2^2\right\}\nonumber\\
&\mathop{=}^{(a)}\sum_{q_u=0}^{Q-1}{\mathbb E}\left\{|\alpha_{q_u}|^2\right\}
y_{\rm L}(\phi_{q_u})\nonumber\\
&\mathop{=}^{(b)}f_{\rm M}\sum_{\phi_{q_u}\in \mathcal{A}_1}{\mathbb E}\left\{|\alpha_{q_u}|^2\right\}\nonumber\\
&=f_{\rm M}\left[\frac{K}{K+1}+\frac{|\mathcal{A}_1|}{\pi(K+1)}\right]\label{eq:term1_2},
\end{align}
\noindent where (\ref{eq:f0}) is substituted in $(a)$, $\mathcal{A}_1$ is the flat-top region of the power pattern.  
In ($b$), the assumption of the perfect pattern synthesis is used as in \emph{Theorem \ref{thm:flattop}}, i.e., $y(\phi)=f(\phi)$, the side lobe $f_{\rm S}=0$, and the rectangular roll-off $\mathcal{A}_3=\emptyset$. 
Furthermore, by assuming that the power of NLoS RIS-UE channel paths is uniformly distributed over $[0,\pi]$, we can replace the term $\sum_{\phi_{q_u}\in\mathcal{A}_1}{\mathbb E}\{|\alpha_{q_u}|^2\}$ in (\ref{eq:term1_2}) by $K/(K+1)+|\mathcal{A}_1|/[\pi(K+1)]$ as previously derived in (\ref{eq:signal2}).

By substituting (\ref{eq:term2}) and (\ref{eq:term1_2}) into (\ref{eq:rateMISOderive}), we find that the analytical rate has the same expression at each subcarrier, therefore, we replace $\sum_{u=0}^{U-1}\sum_{k\in\{{\mathcal K}_u\}}$ in (\ref{eq:rateMISOderive}) by $N_{\rm c}$ and obtain (\ref{eq:rateOFDMA}).

\bibliographystyle{IEEEtran}

\bibliography{reference}

\begin{thebibliography}{10}
\providecommand{\url}[1]{#1}
\csname url@samestyle\endcsname
\providecommand{\newblock}{\relax}
\providecommand{\bibinfo}[2]{#2}
\providecommand{\BIBentrySTDinterwordspacing}{\spaceskip=0pt\relax}
\providecommand{\BIBentryALTinterwordstretchfactor}{4}
\providecommand{\BIBentryALTinterwordspacing}{\spaceskip=\fontdimen2\font plus
\BIBentryALTinterwordstretchfactor\fontdimen3\font minus
  \fontdimen4\font\relax}
\providecommand{\BIBforeignlanguage}[2]{{%
\expandafter\ifx\csname l@#1\endcsname\relax
\typeout{** WARNING: IEEEtran.bst: No hyphenation pattern has been}%
\typeout{** loaded for the language `#1'. Using the pattern for}%
\typeout{** the default language instead.}%
\else
\language=\csname l@#1\endcsname
\fi
#2}}
\providecommand{\BIBdecl}{\relax}
\BIBdecl

\bibitem{Towards2019}
Q.~{Wu} and R.~{Zhang}, ``Towards smart and reconfigurable environment:
  Intelligent reflecting surface aided wireless network,'' \emph{IEEE Commun.
  Mag.}, vol.~58, no.~1, pp. 106--112, Nov. 2019.

\bibitem{Renzo2020Smart2}
M.~{Di Renzo} \emph{et~al.}, ``Smart radio environments empowered by
  reconfigurable intelligent surfaces: How it works, state of research, and the
  road ahead,'' \emph{IEEE J. Sel. Areas Commun.}, vol.~38, no.~11, pp.
  2450--2525, Nov. 2020.

\bibitem{SH2019}
H.~{Shen}, W.~{Xu}, S.~{Gong}, Z.~{He}, and C.~{Zhao}, ``Secrecy rate
  maximization for intelligent reflecting surface assisted multi-antenna
  communications,'' \emph{IEEE Commun. Lett.}, vol.~23, no.~9, pp. 1488--1492,
  Jun. 2019.

\bibitem{ZSQ2020}
S.~{Zhou}, W.~{Xu}, K.~{Wang}, M.~{Di Renzo}, and M.~S. {Alouini}, ``Spectral
  and energy efficiency of {IRS}-assisted {MISO} communication with hardware
  impairments,'' \emph{IEEE Wireless Commun. Lett.}, vol.~9, no.~9, pp.
  1366--1369, Sep. 2020.

\bibitem{SHENHONG2020}
H.~Shen, T.~Ding, W.~Xu, and C.~Zhao, ``Beamformig design with fast convergence
  for {IRS}-aided full-duplex communication,'' \emph{IEEE Commun. Lett.},
  vol.~24, no.~12, pp. 2849--2853, Dec. 2020.

\bibitem{Zhaohui2021Beamforming}
Z.~Yang, W.~Xu, C.~Huang, J.~Shi, and M.~Shikh-Bahaei, ``Beamforming design for
  multiuser transmission through reconfigurable intelligent surface,''
  \emph{IEEE Trans. on Commun.}, vol.~69, no.~1, pp. 589--601, Jan. 2021.

\bibitem{Shen2021}
H.~Shen, W.~Xu, S.~Gong, C.~Zhao, and D.~W.~K. Ng, ``Beamforming optimization
  for {IRS}-aided communications with transceiver hardware impairments,''
  \emph{IEEE Trans. on Commun.}, vol.~69, no.~2, pp. 1214--1227, Feb. 2021.

\bibitem{di2019smart}
M.~Di~Renzo \emph{et~al.}, ``Smart radio environments empowered by
  reconfigurable {AI} meta-surfaces: An idea whose time has come,''
  \emph{EURASIP J. Wireless Commun. Netw.}, vol. 2019, no.~1, pp. 1--20, May
  2019.

\bibitem{Renzo2020Reconfigurable}
M.~{Di Renzo} \emph{et~al.}, ``Reconfigurable intelligent surfaces vs.
  relaying: Differences, similarities, and performance comparison,'' \emph{IEEE
  Open J. Commun. Soc.}, vol.~1, pp. 798--807, Jun. 2020.

\bibitem{Yaqiong2021}
Y.~Zhao, W.~Xu, H.~Sun, D.~W.~K. Ng, and X.~You, ``Cooperative reflection
  design with timing offsets in distributed multi-{RIS} communications,''
  \emph{IEEE Wireless Commun. Lett.}, pp. 1--1, 2021, early access.

\bibitem{Perovi2020}
N.~S. {Perovi\'c}, M.~D. {Renzo}, and M.~F. {Flanagan}, ``Channel capacity
  optimization using reconfigurable intelligent surfaces in indoor {mmWave}
  environments,'' in \emph{Proc. IEEE Int. Conf. Commun. (ICC)}, Dublin,
  Ireland, Jun. 2020, pp. 1--7.

\bibitem{Wang2021}
P.~Wang, J.~Fang, L.~Dai, and H.~Li, ``Joint transceiver and large intelligent
  surface design for massive {MIMO} {mmWave} systems,'' \emph{IEEE Trans.
  Wireless Commun.}, vol.~20, no.~2, pp. 1052--1064, Feb. 2021.

\bibitem{Wenhui2021}
W.~Zhang, J.~Xu, W.~Xu, D.~W.~K. Ng, and H.~Sun, ``Cascaded channel estimation
  for {IRS}-assisted {mmWave} multi-antenna with quantized beamforming,''
  \emph{IEEE Commun. Lett.}, vol.~25, no.~2, pp. 593--597, Feb. 2021.

\bibitem{Pradhan2020}
C.~Pradhan, A.~Li, L.~Song, B.~Vucetic, and Y.~Li, ``Hybrid precoding design
  for reconfigurable intelligent surface aided {mmWave} communication
  systems,'' \emph{IEEE Wireless Commun. Lett.}, vol.~9, no.~7, pp. 1041--1045,
  Jul. 2020.

\bibitem{OFDM9039554}
Y.~{Yang}, B.~{Zheng}, S.~{Zhang}, and R.~{Zhang}, ``Intelligent reflecting
  surface meets {OFDM}: Protocol design and rate maximization,'' \emph{IEEE
  Trans. Commun.}, vol.~68, no.~7, pp. 4522--4535, Jul. 2020.

\bibitem{zhang2019}
S.~{Zhang} and R.~{Zhang}, ``Capacity characterization for intelligent
  reflecting surface aided {MIMO} communication,'' \emph{IEEE J. Sel. Areas
  Commun.}, vol.~38, no.~8, pp. 1823--1838, Aug. 2020.

\bibitem{OFDM8937491}
B.~{Zheng} and R.~{Zhang}, ``Intelligent reflecting surface-enhanced {OFDM}:
  Channel estimation and reflection optimization,'' \emph{IEEE Wireless Commun.
  Lett.}, vol.~9, no.~4, pp. 518--522, Apr. 2019.

\bibitem{Hongyu2021}
H.~Li \emph{et~al.}, ``Intelligent reflecting surface enhanced wideband
  {MIMO-OFDM} communications: From practical model to reflection
  optimization,'' \emph{IEEE Trans. Commun.}, vol.~69, no.~7, pp. 4807--4820,
  Jul. 2021.

\bibitem{Tang2021}
W.~Tang \emph{et~al.}, ``Wireless communications with reconfigurable
  intelligent surface: Path loss modeling and experimental measurement,''
  \emph{IEEE Trans. Wireless Commun.}, vol.~20, no.~1, pp. 421--439, Jan. 2021.

\bibitem{Zheng2020}
B.~Zheng, C.~You, and R.~Zhang, ``Intelligent reflecting surface assisted
  multi-user {OFDMA}: Channel estimation and training design,'' \emph{IEEE
  Trans. Wireless Commun.}, vol.~19, no.~12, pp. 8315--8329, Dec. 2020.

\bibitem{Youli2020}
L.~You, J.~Xiong, D.~W.~K. Ng, C.~Yuen, W.~Wang, and X.~Gao, ``Energy
  efficiency and spectral efficiency tradeoff in {RIS}-aided multiuser {MIMO}
  uplink transmission,'' \emph{IEEE Trans. Signal Process.}, vol.~69, pp.
  1407--1421, Dec. 2020.

\bibitem{Nadeem2020}
Q.~U.~A. {Nadeem}, A.~{Chaaban}, and M.~{Debbah}, ``Opportunistic beamforming
  using an intelligent reflecting surface without instantaneous {CSI},''
  \emph{IEEE Wireless Commun. Lett.}, vol.~10, no.~1, pp. 146--150, Jan. 2020.

\bibitem{Gao2021}
Y.~{Gao}, J.~{Xu}, W.~{Xu}, D.~W.~K. {Ng}, and M.~S. {Alouini}, ``Distributed
  {IRS} with statistical passive beamforming for {MISO} communications,''
  \emph{IEEE Wireless Commun. Lett.}, vol.~10, no.~2, pp. 221--225, Feb. 2021.

\bibitem{Meng2016}
X.~Meng, X.~Gao, and X.-G. Xia, ``Omnidirectional precoding based transmission
  in massive {MIMO} systems,'' \emph{IEEE Trans. Commun.}, vol.~64, no.~1, pp.
  174--186, Jan. 2016.

\bibitem{Meng2018}
X.~Meng, X.-G. Xia, and X.~Gao, ``Omnidirectional space-time block coding for
  common information broadcasting in massive {MIMO} systems,'' \emph{IEEE
  Trans. Wireless Commun.}, vol.~17, no.~3, pp. 1407--1417, Mar. 2018.

\bibitem{guowr2019}
W.~{Guo}, A.~{Lu}, X.~{Meng}, X.~{Gao}, and N.~{Ma}, ``Broad coverage precoding
  design for massive {MIMO} with manifold optimization,'' \emph{IEEE Trans.
  Commun.}, vol.~67, no.~4, pp. 2792--2806, Dec. 2019.

\bibitem{Jamali2021}
V.~Jamali, M.~Najafi, R.~Schober, and H.~V. Poor, ``Power efficiency, overhead,
  and complexity tradeoff of {IRS} codebook design--quadratic phase-shift
  profile,'' \emph{IEEE Commun. Lett.}, vol.~25, no.~6, pp. 2048--2052, Jun.
  2021.

\bibitem{Xiao2016}
Z.~Xiao, T.~He, P.~Xia, and X.-G. Xia, ``Hierarchical codebook design for
  beamforming training in millimeter-wave communication,'' \emph{IEEE Tran.
  Wireless Commun.}, vol.~15, no.~5, pp. 3380--3392, May 2016.

\bibitem{Hai2021}
H.~Lu, Y.~Zeng, S.~Jin, and R.~Zhang, ``Aerial intelligent reflecting surface:
  Joint placement and passive beamforming design with {3D} beam flattening,''
  \emph{IEEE Trans. Wireless Commun.}, vol.~20, no.~7, pp. 4128--4143, Jul.
  2021.

\bibitem{He2021}
M.~He, W.~Xu, and C.~Zhao, ``{RIS}-assisted broad coverage for {mmWave} massive
  {MIMO} system,'' in \emph{2020 IEEE Int. Conf. Commun. Workshops (ICC
  Workshops)}, Jun. 2021, pp. 1--6.

\bibitem{Yu2019}
X.~Yu, D.~Xu, and R.~Schober, ``{MISO} wireless communication systems via
  intelligent reflecting surfaces: (invited paper),'' in \emph{Proc. IEEE/CIC
  Int. Conf. Commun. China (ICCC)}, Changchun, China, Aug. 2019, pp. 735--740.

\bibitem{Gao2016}
Z.~Gao, C.~Hu, L.~Dai, and Z.~Wang, ``Channel estimation for millimeter-wave
  massive {MIMO} with hybrid precoding over frequency-selective fading
  channels,'' \emph{IEEE Commun. Lett.}, vol.~20, no.~6, pp. 1259--1262, Jun.
  2016.

\bibitem{Venugopal2017}
K.~Venugopal, A.~Alkhateeb, N.~Gonz{\'a}lez~Prelcic, and R.~W. Heath~Jr.,
  ``Channel estimation for hybrid architecture-based wideband millimeter wave
  systems,'' \emph{IEEE J. Sel. Areas Commun.}, vol.~35, no.~9, pp. 1996--2009,
  Sep. 2017.

\bibitem{tan2015}
P.-N. Tan, M.~Steinbach, and V.~Kumar, \emph{Introduction to data
  mining}.\hskip 1em plus 0.5em minus 0.4em\relax Reading, MA, USA:
  Addison-Wesley, 2015.

\bibitem{bertsekas1997}
D.~P. Bertsekas, \emph{Nonlinear Programming}, 2nd~ed.\hskip 1em plus 0.5em
  minus 0.4em\relax \!Belmont, MA: Athena Scientific, 1999.

\bibitem{hjorungnes2011}
A.~Hj{\o}rungnes, \emph{Complex-valued Matrix Derivatives: with Applications in
  Signal Processing and Communications}.\hskip 1em plus 0.5em minus 0.4em\relax
  Cambridge University Press, 2011.

\bibitem{dujb2019}
J.~{Du}, W.~{Xu}, C.~{Zhao}, and L.~{Vandendorpe}, ``Weighted spectral
  efficiency optimization for hybrid beamforming in multiuser massive
  {MIMO-OFDM} systems,'' \emph{IEEE Trans. Veh. Technol.}, vol.~68, no.~10, pp.
  9698--9712, Jul. 2019.

\bibitem{Yu2016}
X.~{Yu}, J.~{Shen}, J.~{Zhang}, and K.~B. {Letaief}, ``Alternating minimization
  algorithms for hybrid precoding in millimeter wave {MIMO} systems,''
  \emph{IEEE J. Sel. Topics Signal Process.}, vol.~10, no.~3, pp. 485--500,
  Apr. 2016.

\bibitem{absil2009}
P.-A. Absil, R.~Mahony, and R.~Sepulchre, \emph{Optimization Algorithms on
  Matrix Manifolds}.\hskip 1em plus 0.5em minus 0.4em\relax Princeton
  University Press, 2009.

\bibitem{wu2019intelligent}
Q.~Wu and R.~Zhang, ``Intelligent reflecting surface enhanced wireless network
  via joint active and passive beamforming,'' \emph{IEEE Trans. Wireless
  Commun.}, vol.~18, no.~11, pp. 5394--5409, Nov. 2019.

\bibitem{QIZHANG2014}
Q.~{Zhang}, S.~{Jin}, K.~{Wong}, H.~{Zhu}, and M.~{Matthaiou}, ``Power scaling
  of uplink massive {MIMO} systems with arbitrary-rank channel means,''
  \emph{IEEE J. Sel. Topics Signal Process.}, vol.~8, no.~5, pp. 966--981, Oct.
  2014.

\bibitem{horn1994topics}
R.~A. Horn and C.~R. Johnson, \emph{Topics in Matrix Analysis}.\hskip 1em plus
  0.5em minus 0.4em\relax Cambridge, U.K.: Cambridge Univ. Press, 1994.

\bibitem{Kraus2008}
J.~D. Kraus and R.~J. Marhefka, \emph{Antennas: For All Applications (Third
  Edition)}.\hskip 1em plus 0.5em minus 0.4em\relax McGraw-Hill, 2008.

\end{thebibliography}

\end{document}